\documentclass[a4paper,10pt,USenglish]{article}
\usepackage{fullpage}
\usepackage{amsmath, amssymb}
\usepackage{wrapfig}

\usepackage[shortlabels]{enumitem}

\usepackage[nocompress]{cite}
\newcommand{\ceil}[1]{\left\lceil{#1}\right\rceil}

\usepackage[dvipsnames,usenames]{xcolor}
\usepackage{graphicx}
\graphicspath{{figures/}} 
\usepackage{algorithmic}

\RequirePackage{hyperref}
\hypersetup{colorlinks=true, urlcolor=Blue, citecolor=Green, linkcolor=BrickRed, breaklinks, unicode}

\interfootnotelinepenalty=10000

\newcommand{\polylog}{\ensuremath\operatorname{polylog}}

\newcommand{\Start}{\ensuremath\mathrm{start}}

\newcommand{\Char}{\ensuremath\mathrm{char}}
\newcommand{\Pos}{\ensuremath\mathrm{pos}}
\newcommand{\First}{\ensuremath\mathrm{first}}
\newcommand{\Last}{\ensuremath\mathrm{last}}

\newcommand{\parent}{\ensuremath\mathsf{parent}}

\newcommand{\slab}{\ensuremath\mathrm{slab}}

\newcommand{\Rank}{\ensuremath\operatorname{\mathsf{rank}}}
\newcommand{\Select}{\ensuremath\operatorname{\mathsf{select}}}
\newcommand{\PrefixSelect}{\ensuremath\operatorname{\mathsf{prefix-select}}}

\newcommand{\SegmentSelect}{\ensuremath\operatorname{\mathsf{segment-select}}}
\newcommand{\SlabSum}{\ensuremath\operatorname{\mathsf{slab-sum}}}
\newcommand{\SlabSelect}{\ensuremath\operatorname{\mathsf{slab-select}}}
\newcommand{\Access}{\ensuremath\operatorname{\mathsf{access}}}
\newcommand{\Replace}{\ensuremath\operatorname{\mathsf{replace}}}
\newcommand{\Insert}{\ensuremath\operatorname{\mathsf{insert}}}
\newcommand{\Delete}{\ensuremath\operatorname{\mathsf{delete}}}
\newcommand{\Predecessor}{\ensuremath\operatorname{\mathsf{predecessor}}}

\newtheorem{theorem}{Theorem}

\newtheorem{lemma}[theorem]{Lemma}

\newenvironment{proof}{

\noindent{\bf Proof:}} {\hfill$\blacksquare$\medskip}

\bibliographystyle{plainurl}
\title{Random Access in Persistent Strings and Segment Selection\footnote{An extended abstract appeared at the 31st International Symposium on Algorithms and Computation~\cite{BG2020}}}

\author{Philip Bille \\\texttt{phbi@dtu.dk} \and Inge Li G{\o}rtz \\\texttt{inge@dtu.dk}}

\begin{document}

\maketitle
\begin{abstract}
	We consider compact representations of collections of similar strings that support random access queries. The collection of strings is given by a rooted tree where edges are labeled by an edit operation (inserting, deleting, or replacing a character) and a node represents the string obtained by applying the sequence of edit operations on the path from the root to the node. The goal is to compactly represent the entire collection while supporting fast random access to any part of a string in the collection. This problem captures natural scenarios such as representing the past history of an edited document or representing highly-repetitive collections. Given a tree with $n$ nodes, we show how to represent the corresponding collection in $O(n)$ space and $O(\log n/ \log \log n)$ query time. This improves the previous time-space trade-offs for the problem. Additionally, we show a lower bound proving that the query time is optimal for any solution using near-linear space. 
	
To achieve our bounds for random access in persistent strings we show how to reduce the problem to the following natural geometric selection problem on line segments. Consider a set of horizontal line segments in the plane. Given parameters $i$ and $j$, a segment selection query returns the $j$th smallest segment (the segment with the $j$th smallest $y$-coordinate) among the segments crossing the vertical line through $x$-coordinate $i$. The segment selection problem is to preprocess a set of horizontal line segments into a compact data structure that supports fast segment selection queries. We present a solution that uses $O(n)$ space and support segment selection queries in $O(\log n/ \log \log n)$ time, where $n$ is the number of segments. Furthermore, we prove that that this query time is also optimal for any solution using near-linear space.
\end{abstract}

\section{Introduction}
The \emph{random access problem} is to preprocess a data set into a compressed representation that supports fast retrieval of any part of the data without decompressing the entire data set. The random access problem is a well-studied problem for many types of data and compression schemes~\cite{BLRSSW2015, VY2013, BCPT2014, FV2007, SG2006, GRRV2013, MN2015, BCGNN2014, BGLW2015, BCCGSVV2018, KP2018} and random access queries is a basic primitive in several algorithms and data structures on compressed data, see e.g.,~\cite{BLRSSW2015,GGKNP2012,GGKNP2014,BEGV2018,GGP2015}

In this paper, we initiate the study of the random access problem on collections of strings where each string is the result of an \emph{edit operation}, i.e., insert, delete, or replace a single character, from another string in the collection. Specifically, our collection is given by a rooted tree, called a \emph{version tree}, where edges are labeled by an edit operation, the root represents the empty string, and a node represents the string obtained by applying the sequence of edit operation on the path from the root to the node (see Figure~\ref{fig:reduction}(a)). We call such a collection a \emph{persistent string} since we can naturally view it as persistent versions of a single string. Given a node $v$ and an index $j$, a random access query returns the character at position $j$ in the string represented by $v$.

Random access in persistent strings captures natural scenarios for collections of similar strings. For instance, consider the problem storing and accessing the past history of edits in a document. Instead of explicitly storing all versions of the document, we can represent the entire history compactly as a path of updates. Random access in a past version of the document then corresponds to a random access query on the corresponding node on the path. In our setup we can even support branching in the history of the document, as in version control systems, to form a tree of document histories. As another example, consider storing and accessing a collection of related genome sequences. If we know (a good approximation of) the edit distance between the pairs of genome sequences, we can construct a small version tree representing the collection from the minimum spanning tree of the pairs of distance. Again, random access in a sequence in the collection corresponds to a random access query on the corresponding node.

To the best our knowledge, no previous work has explicitly considered random access on persistent strings, but several well-known techniques and results can be combined to provide non-trivial bounds on the problem (we review these solutions in Section~\ref{sec:previouswork}). However, all of these solutions lead to suboptimal bounds. In this paper, we introduce a new representation of persistent strings that supports random access. Our representation uses $O(n)$ space and supports random access queries in $O(\log n/ \log \log n)$ time, where $n$ is the number of nodes in the version tree (or equivalently the number of strings in the collection). This improves the best known combinations of time and space among all previous solutions. Furthermore, we prove that any solution using near linear space needs $\Omega(\log n/\log \log n)$ query time, thus showing that our query time is optimal.

To achieve our bounds for random access in persistent strings we show how to reduce the problem to the following natural geometric selection problem on line segments. 
Consider a set of horizontal line segments in the plane. Given parameters $i$ and $j$, a \emph{segment selection query} returns the $j$th smallest segment (the segment with the $j$th smallest $y$-coordinate) among the segments crossing the vertical line through $x$-coordinate $i$. The \emph{segment selection problem} is to preprocess a set of horizontal line segments into a compact data structure that supports fast segment selection queries. To the best of our knowledge no previous results are known for segment selection. In this paper, we present a solution that uses $O(n)$ space and support segment selection queries in $O(\log n/ \log \log n)$ time, where $n$ is the number of segments. Furthermore, by combining the lower for random access in persistent strings and our reduction we show that our query time for segment selection is also optimal for any solution using near linear space.

\subsection{Random Access in Persistent String}
Let $T$ be a version tree with $n$ nodes. Each node $v$ of $T$ represents a string $S(v)$ and each edge is labeled by one of the following edit operations: 
\begin{itemize}
	\item $\Replace(k, \alpha)$: change the $k$th character to $\alpha$. 
	\item $\Insert(k,\alpha)$: insert character $\alpha$ immediately after position $k$.
	\item $\Delete(k)$: delete the character at position $k$.
\end{itemize}
The string represented by the root is the empty string $\varepsilon$, and the string represented by a non-root node $v$ is the result of applying all edit operations on the path from the root to $v$ on the empty string. Our goal is to preprocess $T$ into a compact data structure that supports the query $\Access(v,j)$, that returns $S(v)[j]$. While no previous work has explicitly considered random access in persistent strings, standard techniques can be adapted to achieve non-trivial time-space trade-offs. In particular, using persistent binary search trees leads to a solution with $O(n)$ space and $O(\log n)$ query time. Alternatively, using recent grammar compression techniques to represent the collection leads to a solution with $O(n\log^{1+\varepsilon} n)$ space and $O(\log n/ \log \log n)$ time. We review these solutions in Section~\ref{sec:previouswork}. We present a new representation of persistent strings that achieves the following bound: 	
\begin{theorem}\label{thm:mainupperbound} 
Given a version tree $T$ with $n$ nodes we can solve the random access problem in $O(n)$ space and $O(\log n/ \log \log n)$ query time. Furthermore, we can report a substring of length $\ell$ using $O(\ell)$ additional time. 
\end{theorem}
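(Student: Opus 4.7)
The plan is to reduce $\Access$ to the segment selection problem and then invoke the $O(n)$-space, $O(\log n/\log\log n)$-time segment selection data structure developed in the paper. First I would root $T$ and perform a DFS, identifying each node $v$ with its entry time $t(v)$; this entry-time axis becomes the $x$-axis of the segment selection instance.

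For each character $c$ ever appearing in some $S(v)$, $c$ is created by a unique edge (an $\Insert$ or the creating half of a $\Replace$) and destroyed by a pairwise-incomparable set of $d_c$ edges strictly below its creator (each a $\Delete$ or the destroying half of a $\Replace$). The set of nodes whose strings contain $c$ is exactly the subtree of its creator minus the subtrees of its destroyers, which in entry-time coordinates is a disjoint union of at most $d_c+1$ intervals. Summing over $c$, the total interval count is at most (\#create edges)+(\#destroy edges)$=O(n)$. Each character is then assigned a fixed $y$-coordinate equal to its rank in a global linear order on characters. Such an order is well-defined because the relative order of any two coexisting characters is invariant under the insertion, deletion, or replacement of other characters; the induced relation ``$c_1$ precedes $c_2$ in some string where both are present'' is therefore a consistent partial order and any linear extension suffices. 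I would compute one by a single DFS that maintains an order-maintenance structure of all characters ever created: on $\Insert(k,\alpha)$ insert $\alpha$ immediately after the current $k$th live character, and on $\Replace(k,\alpha)$ insert $\alpha$ adjacent to the character it replaces. Reading off positions at the end gives the $y$-coordinates.

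Each character now contributes up to $d_c+1$ horizontal segments sharing a common $y$, for $O(n)$ segments in total. A query $\Access(v,j)$ reduces to a single segment selection query at $x=t(v)$ with rank $j$: since the $y$-coordinates agree with every string's left-to-right order, the returned segment's $y$ identifies $S(v)[j]$ via a side lookup table. Plugging in the segment selection data structure yields the claimed $O(n)$ space and $O(\log n/\log\log n)$ query time. To report a substring of length $\ell$, I would augment the segment selection structure with an $O(1)$-time ``next segment crossing the query line'' primitive obtained by a successor walk at the bottom level of the selection structure, and iterate $\ell-1$ times after the initial query for $O(\ell)$ additional time.

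The main obstacle I anticipate is the bookkeeping around deletions and replacements: a deletion in one subtree must not remove a character from ancestors or sibling subtrees, and a $\Replace$ must act as both destroyer and creator at essentially the same slot. Ensuring that the lifetime decomposition uses only $O(n)$ intervals in total, that the global linear extension assigns distinct $y$-values to the replaced and replacement characters while still extending every string's order (which the order-maintenance structure handles cleanly), and that the $O(1)$-per-character successor walk plugs into the segment selection data structure without breaking its space bound, are the technical details that will require the most care.
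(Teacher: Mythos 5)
Your main reduction is essentially the paper's: a DFS/Euler-tour timeline as the $x$-axis, each character's ``lifetime'' decomposed into at most one interval per creating edge plus one per deleting edge (so $O(n)$ segments total), a fixed global $y$-coordinate equal to the character's rank in the final left-to-right order, and $\Access(v,j)$ answered by one segment selection query at the entry time of $v$. The paper realizes the same plan via marked strings and the position $\Pos(e)$ in the last marked string, and it splits each $\Replace$ into a $\Delete$ plus an $\Insert$ rather than treating it as simultaneous destroyer/creator as you do; these are cosmetic differences. (One small detail the paper handles explicitly that you gloss over: it maps an interval $[i,j]$ to $x$-coordinates $[2i-1,2j]$ so that no two segments share an endpoint and at most one segment starts or ends per time step, which is the coordinate discipline the slab/segment-selection structure assumes; this is standard rank reduction and not a real gap.)

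The genuine gap is in the substring-reporting claim. You assert an $O(1)$-time ``next segment crossing the query line'' primitive obtained by ``a successor walk at the bottom level of the selection structure,'' but the leaves of the selection tree are sorted by $y$-coordinate over \emph{all} segments, and the segments crossing a fixed vertical line $x=i$ are an arbitrary subset of those leaves: between two consecutive crossing segments there can be arbitrarily many leaves whose segments do not cross $x=i$, so a leaf-level walk cannot skip to the next crossing segment in constant time. Repeating the full selection query per character would cost $O(\ell \log n/\log\log n)$, not $O(\ell)$ additional time. The paper closes exactly this hole by building Chazelle's hive graph on the segments, an $O(n)$-space structure that, after the initial selection, lets one traverse the segments crossing the query line upward in sorted $y$-order at $O(1)$ per reported segment; some such additional machinery (hive graph, fractional cascading, or similar) is needed for your $O(\ell)$ bound to hold.
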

Theorem~\ref{thm:mainupperbound} simultaneously matches the best known space and time bounds of the previous approaches. In particular, compared to the solution using binary search trees we match the space while improving the $O(\log n)$ query time to $O(\log n/\log \log n)$. On the other hand, compared to the solution using grammar compressed techniques, we match the query time while improving the space from $O(n\log^{1+\varepsilon} n)$ to linear. Furthermore, we show the following matching lower bound. 
\begin{theorem}\label{thm:mainlowerbound} 
Any data structure that solves the random access problem on a version tree $T$ with $n$ nodes using $n \log^{O(1)} n$ space needs $\Omega(\log n/ \log \log n)$ query time. This holds even in the special case when $T$ is a path.  
\end{theorem}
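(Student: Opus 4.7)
The plan is to prove the lower bound by reduction from a classical cell-probe lower bound. The target bound $\Omega(\log n/\log\log n)$ for data structures of size $n\log^{O(1)} n$ is the familiar ``shaving factor'' regime studied by P\u{a}tra\c{s}cu and others, and several natural problems are known to match it: ball inheritance (Chan--Larsen--P\u{a}tra\c{s}cu), 2-D range counting and range selection, and certain rank/select-style queries on sequences with updates. I plan to reduce one such problem to random access on a persistent string whose version tree is a path, which automatically yields the strengthened form of the statement.

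Concretely, given a hard-problem instance of size $m$, I would encode each unit of input by a constant (or at worst polylogarithmic) number of edit operations applied in sequence to the initially empty string, producing a path-shaped version tree with $N = m^{O(1)}$ nodes. The key property to arrange is that at a version $v$ determined by the hard query, the persistent string encodes the relevant transformed prefix of the input so that a single $\Access(v, j)$, for an easily computable $j$, returns the desired answer. Once this is in place, the conclusion is immediate: any random-access data structure with space $N\log^{O(1)} N$ and query time $T(N)$ yields a data structure for the base problem with space $m\log^{O(1)} m$ and query time $T(m^{O(1)}) = T(N)$, so the lower bound on the base problem forces $T(N) = \Omega(\log m/\log\log m) = \Omega(\log N/\log\log N)$ since $N$ is polynomial in $m$.

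The main obstacle, and the technical heart of the proof, is in the design of the reduction itself: the encoding must be both compact ($N$ polynomial in $m$, with only polylog blow-up tolerated) and \emph{access-transparent}, meaning each hard query must be resolvable by $O(1)$ access queries at explicitly computable coordinates. The delicate point is that each $\Insert$ or $\Delete$ shifts all subsequent characters, so naively, positions at later versions do not line up with the indices of the base problem. I expect to handle this either by interleaving the payload symbols with a fixed pattern of separator symbols in a known layout (so that the base index $j$ can be recovered from the corresponding persistent string index by a simple arithmetic transform), or by restricting all insertions to a constant set of anchor positions (for example, always at the current end of the string) so that the indexing is straightforward; either strategy keeps the constructed version tree a single path and hence delivers the path-case lower bound asserted in the theorem.
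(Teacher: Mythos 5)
Your high-level plan (reduce from a known cell-probe-hard selection problem so that one hard query becomes one $\Access$ query on a path-shaped version tree) is the right strategy and matches the paper's in spirit, but the proposal stops exactly where the proof has to happen: you explicitly defer ``the design of the reduction itself,'' which is the entire content of the theorem. Worse, the two concrete strategies you float for dealing with index shifts would not work. If you restrict all insertions to a constant set of anchor positions (e.g.\ always appending at the end), then the string at version $v_i$ is just a prefix (or a trivially computable splice) of the final string, and random access is solvable in $O(1)$ time and linear space, so no lower bound can be extracted from such instances. Similarly, interleaving payload with separators so that the queried position is recoverable by a fixed arithmetic transform removes the data-dependence of positions, and it is precisely this data-dependence that must carry the hardness: a path with $n$ insert edges only contains $O(n\log n)$ bits, so the $\Theta(n^2)$ query answers have to be encoded implicitly through how insertions displace earlier characters, not read off from a rigid layout.

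The paper's proof shows how short the argument is once you embrace the shifting rather than neutralize it. Reduce from prefix selection (the range-selection lower bound of J{\o}rgensen and Larsen: $n\log^{O(1)}n$ space forces $\Omega(\log n/\log\log n)$ time). Given an array $A$ of $n$ distinct integers, build a path $v_0,\ldots,v_n$ where edge $(v_{i-1},v_i)$ is labeled $\Insert(r_i,\,i)$ with $r_i$ equal to the number of entries of $A[1..i]$ smaller than $A[i]$. Then $S(v_i)$ is exactly the sequence of indices of $A[1..i]$ in sorted order of their values, so $\PrefixSelect(i,j)$ is answered by the single query $\Access(v_i,j)$: no blow-up ($N=n+1$), no position arithmetic, insert-only, path-shaped, and the space and time bounds transfer verbatim. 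To repair your write-up you would need to supply a reduction with this access-transparency property; as stated, the proposal identifies the obstacle but proposes workarounds that eliminate the source of hardness, so it does not establish the theorem.
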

Hence, the query time in Theorem~\ref{thm:mainupperbound} is optimal for any near-linear space solution. Note that Theorem~\ref{thm:mainlowerbound} even holds for version trees that are simple paths, such as in the example with storing and accessing the past history of edits in a document.

\subsection{Segment Selection}
Let $L$ be a set of $n$ horizontal line segments in the plane. The \emph{segment selection problem} is to preprocess $L$ to support the operation: 
\begin{itemize}
	\item $\SegmentSelect(i,j)$: return the $j$th smallest segment (the segment with the $j$th smallest $y$-coordinate) among the segments crossing the vertical line through $x$-coordinate $i$.  
\end{itemize}
To the best our knowledge no previous work has considered the segment selection problem. A number of related problems on orthogonal line segments are well-studied. For instance, in the 1-D stabbing max problem, the goal is to store a set of horizontal line segments, each with a given priority such that we can quickly return the segment of highest priority crossing the vertical line through $x$-coordinate $i$, see e.g.,~\cite{Nekrich2011,AAKMTY2012,CT2018}. Another related problem is the 1-D vertical ray shooting problem. Here, the goal is to store a set of horizontal line segments such that given a query point $q$ we can quickly return the lowest segment above $q$, see e.g.,\cite{BVS1995,CP2009,Chan2013,CT2018}. We view segment selection as a natural variant and believe that it will likely be of independent interest. We present a new representation that achieves the following bounds.
\begin{theorem}\label{thm:segmentselection} 
Given a set of $n$ horizontal segments in the plane, we can solve the segment selection problem in $O(n)$ space and $O(\log n/ \log \log n)$ query time.
\end{theorem}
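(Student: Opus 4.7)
The plan is to reduce segment selection to a persistent-selection problem via an $x$-sweep and then solve the latter using a shallow branching tree whose per-node primitive runs in $O(1)$ time via word-level tricks. First, coordinate-compress the $x$- and $y$-coordinates into $[2n]$ and sort the $2n$ segment endpoints by $x$. Sweeping $x$ from left to right, associate to each segment $s$ an insertion of $y_s$ at its left endpoint and a deletion of $y_s$ just past its right endpoint. Then the set of segments crossing the line $x=i$ is exactly the set of $y$-values present after processing all events with $x$-coordinate $\le i$, so a $\SegmentSelect(i,j)$ query becomes a selection of the $j$-th smallest element in the $i$-th version of the resulting persistent dynamic set. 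Hence it suffices to support $2n$ inserts and deletes on a sorted set while making every historical version queryable for selection in $O(\log n/\log\log n)$ time and $O(n)$ total space.

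The main structure I would use is a weight-balanced B-tree $T^\star$ over the $y$-universe $[2n]$ with branching factor $B=\lceil \log^{\varepsilon} n \rceil$ and height $O(\log_B n)=O(\log n/\log\log n)$. A selection query descends $T^\star$ root-to-leaf: at an internal node $v$ with children $c_1,\ldots,c_B$ and residual rank $k$, a per-node primitive identifies the unique child $c_h$ whose subtree contains the $k$-th active element at sweep-version $i$, subtracts the prefix counts of $c_1,\ldots,c_{h-1}$ from $k$, and recurses. Because the tree has only $O(\log n/\log\log n)$ levels, the per-node primitive must run in $O(1)$ worst-case time while the entire tree must be persistent in $O(n)$ total space.

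The per-node primitive is the crux of the proof. For node $v$, restrict the sweep event stream to the $O(n_v)$ updates touching $v$'s subtree and label each update by the child $h \in [B]$ whose subtree contains the affected leaf. The task at $v$ is then: given a time $t$ and rank $k$, find the smallest $h$ such that the net number of $+$ minus $-$ labels $\le h$ among the first $t$ updates is at least $k$. Since $B \le \log^{\varepsilon} n$, a full $B$-entry count vector fits in $O(1)$ machine words. I would maintain per node a sampling scheme that stores the packed count vector every $\Theta(B)$ updates together with a compact difference log for in-between versions; a query then reconstructs the current packed vector in $O(1)$ word operations and locates the target child by a word-parallel prefix-selection trick on that vector. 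Each node spends $O(n_v)$ space, and summing over $T^\star$ yields $O(n)$ total.

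The hardest part will be making this per-node primitive compose cleanly with persistence and with the weight-balance invariants: one must ensure that the periodic samples, the packed difference logs, and the word-parallel prefix-selection on packed vectors all remain valid simultaneously for every one of the $2n$ historical versions without inflating space beyond linear, and that B-tree rebalancing does not invalidate the samples of affected nodes. Once the primitive is established, the tree traversal uses it once per level to give the $O(\log n/\log\log n)$-time, $O(n)$-space bound claimed in Theorem~\ref{thm:segmentselection}.
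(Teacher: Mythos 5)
Your high-level skeleton is the same as the paper's: view the $x$-axis as a timeline, build a degree-$\Theta(\log^{\varepsilon} n)$ tree over the segments in $y$-order, and descend it with a constant-time per-node child-selection primitive. The genuine gaps are in that primitive and in the space accounting, and they sit exactly where the paper's main technical work is. First, the space claim does not add up: each segment (each insert/delete event) appears in the event stream of $\Theta(\log n/\log\log n)$ nodes, one per level, so ``$O(n_v)$ space per node'' sums to $\Theta(n\log n/\log\log n)$ words, not $O(n)$. Concretely, your samples alone cost $\Theta\bigl((n_v/B)\cdot B\log n\bigr)=\Theta(n_v\log n)$ bits per node, since each of the $B$ counters may be as large as $n_v$. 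This is precisely the ``explicit representation'' baseline the paper says is easy and then improves upon: to get $O(n)$ words in total one needs each node's structure to occupy $O(n_v\log\log n)$ \emph{bits}, which the paper achieves (Lemma~\ref{lem:slabselection}) by a two-level decomposition of the prefix-sum grid into blocks and row groups/cells, storing only $O(\log\log n)$-bit normalized entries and $O(\log\log n)$-bit per-column differences, with a fusion-node predecessor over the $O(\Delta)$ row-group representatives and global tabulation over normalized cells.

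Second, your word-packing step fails as stated: a $B$-entry count vector with counts up to $n_v$ occupies $\Theta(B\log n)=\Theta(\log^{1+\varepsilon} n)$ bits, which is $\omega(1)$ machine words, so it cannot be reconstructed or prefix-selected ``in $O(1)$ word operations'' directly; making the entries small enough ($O(\log\log n)$ bits) to handle in constant time is exactly what normalizing against cell representatives buys in the paper. Two smaller points: when you recurse into a child you must also convert the query time into the child's local timeline in $O(1)$ time, which needs a rank structure over the child labels of the updates (the paper's string $E_v$ with Lemma~\ref{lem:rankselect}, and the correctness argument of Lemma~\ref{lem:rankreduction}); your difference log could in principle support this, but you never address it. And the difficulty you flag about weight-balance invariants and persistence is moot: the problem is static, so one simply builds a fixed balanced tree over the $y$-sorted segments and never rebalances.
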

As an direct consequence of Theorem~\ref{thm:mainlowerbound} and our reduction from random access on a persistent string, we obtain the following lower bound for segment selection.   
\begin{theorem}\label{thm:segmentselectionlowerbound}
	Any data structure that solves the segment selection problem on $n$ segments nodes using $n \log^{O(1)} n$ space needs $\Omega(\log n/ \log \log n)$ query time. 
\end{theorem}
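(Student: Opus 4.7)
The plan is to obtain Theorem~\ref{thm:segmentselectionlowerbound} as an immediate corollary of Theorem~\ref{thm:mainlowerbound}, using the reduction from random access in persistent strings to segment selection that is established (as a black box) in the proof of Theorem~\ref{thm:mainupperbound}. Recall that this reduction takes a version tree $T$ with $n$ nodes and constructs a set $L$ of $O(n)$ horizontal segments in the plane, along with $O(n)$ auxiliary information, such that any query $\Access(v,j)$ on $T$ can be answered by a constant number of $\SegmentSelect$ queries on $L$ plus $O(1)$ additional work.

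The argument then goes by contradiction. Suppose there exists a data structure for segment selection on $n$ segments using $s(n) = n\log^{O(1)} n$ space and query time $t(n)$. Given a version tree $T$ with $n$ nodes, I would apply the reduction to obtain $L$ with $|L| = O(n)$, build the hypothetical segment selection data structure on $L$, and keep the $O(n)$ auxiliary information alongside it. The resulting compound structure solves random access on $T$ in space $O(s(O(n)) + n) = n\log^{O(1)} n$ and query time $O(t(O(n)) + 1)$. By Theorem~\ref{thm:mainlowerbound}, any such data structure must have query time $\Omega(\log n/\log\log n)$, which forces $t(n) = \Omega(\log n/\log\log n)$ as claimed.

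The main thing to verify is that the reduction has exactly the two structural properties needed for the transfer: it produces only $O(n)$ segments, so that a lower bound parametrized by the number of segments converts to a lower bound parametrized by the number of version-tree nodes, and it answers each random access query with only $O(1)$ segment selection queries, so that the query-time asymptotics carry over without a multiplicative blow-up. Both properties are already guaranteed by the reduction used for Theorem~\ref{thm:mainupperbound}, so no further machinery is required.
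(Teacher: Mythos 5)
Your proposal is correct and follows exactly the paper's route: the paper proves Theorem~\ref{thm:segmentselectionlowerbound} by combining the reduction of Lemma~\ref{lem:reduction} (one $\SegmentSelect$ query per $\Access$ query, at most $n-1$ segments, $O(n)$ extra space for labels and $\Start$-values) with the random access lower bound of Theorem~\ref{thm:mainlowerbound}, just as you do. The only cosmetic difference is that you cite the reduction as part of the proof of Theorem~\ref{thm:mainupperbound} rather than as the standalone Lemma~\ref{lem:reduction} from Section~\ref{sec:reduction}.
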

Hence, the query time in Theorem~\ref{thm:segmentselection} is optimal for any near-linear space solution.

\subsection{Techniques}
To achieve our result we introduce several techniques of independent interest. %
First, we show how to reduce random access queries on a persistent string to segment selection queries. The main idea is to traverse the version tree in a depth-first traversal and produce segments representing characters appearing in the versions of the persistent strings. The $x$-coordinates of the segments correspond to the traversal time interval and the $y$-coordinates correspond to the left-to-right ordering of the characters in the strings. We show how to construct segments such that at any point in time $i$, the segments crossing the vertical line through $x$-coordinate $2i$ corresponds to the string represented at the node in $T$ first visited at time $i$. It follows that any random access query can be answered by a corresponding segment selection query.

Next, we show how to efficiently solve the segment selection problem in linear space and $O(\log n/ \log \log n)$ query time. To do so, the main idea is to build a balanced tree of degree $\Delta = O(\log^\varepsilon n)$ and of height $O(\log_\Delta n) = O(\log n/\log \log n)$ that stores the segments ordered by $y$-coordinate. Each internal node thus partitions the segments below it into $\Delta$ horizontal bands called \emph{slabs}. 

To answer a segment selection query $(i,j)$ we traverse the tree to find the leaf containing the $j$th segment that crosses the vertical line at time $i$. To implement the traversal we need to determine at each node $v$ the slab containing the desired segment among the segments below $v$ at the specified time $i$. The key challenge is to compactly represent the segments while achieving constant query time to find the correct slab at each node. Using well-known  techniques we can solve this \emph{slab selection problem} with an explicit representation of segments below $v$ in constant time and $O(n_v)$ of space, where $n_v$ is the number of segments below $v$. Unfortunately, this leads to a solution to segment selection that uses $O(n \log_\Delta n) = O(n \log n/ \log \log n)$ space. We show how to compactly represent the segments to significantly improve the space to $O(n_v \log \log n)$ \emph{bits} while simultaneously achieving constant time queries. In turn, this implies a solution to segment selection using $O(n)$ space and $O(\log_\Delta n) = O(\log n/\log \log n)$ query time.

Finally, we prove a matching lower bound for the random access in persistent strings problem by showing that any solution using $n \log^{O(1)} n$ space needs $\Omega(\log n/ \log \log n)$ query time. To do so we show a simple reduction from the \emph{range selection problem}~\cite{JL2011}. By the reduction from random access queries in persistent strings to segment selection queries it directly follows that the same lower bound applies to segment selection.

\subsection{Previous Work}\label{sec:previouswork} 
To the best of our knowledge no previous work has explicitly considered supporting random access in persistent strings. However, several existing approaches can be applied or extended to obtain non-trivial solutions to the problem and several related models of repetitiveness have been proposed. We discuss these in the following. To state the bounds, let $T$ be a version tree with $n$ nodes representing a collection of $n$ strings of total size $N$. Since any string represented by a node in $T$ can be the result of at most $n$ insertions we have that $N = O(n^2)$. Hence, naively we can solve the random access problem by explicitly storing all strings using $O(N) = O(n^2)$ space and $O(1)$ query time. With techniques from either persistent or compressed data structures we can significantly improve this as discussed below.  

\paragraph{Persistent Data Structures and Dynamic Strings} 
Ordinary data structures are \emph{ephemeral} in the sense that updating the data structure destroys the old version and only leaves the new version. A data structure is \emph{persistent} if it preserves old versions of itself and allows queries and/or updates to them. In \emph{partial persistence} we allow queries on all versions but only updates on the newest version, and in \emph{full persistence} we allow queries and updates on all versions. Thus, in partial persistence the versions form a path whereas in full persistence the versions form a tree called the \emph{version tree}. Persistent data structures is a classic data structural concept and were first formally studied by~Driscoll et. al.~\cite{JSST1989}.

A \emph{dynamic string} data structure supports the edit operations (insert, delete, and replace) and access to any character in the string. An immediate approach to solve the random access problem in persistent strings is to make a dynamic string data structure fully persistent. To do so, we simply traverse the version tree and perform the edit operations on the edges. To answer a random access query on a string represented by a node $v$ we simply perform a persistent access operation on the version of the data structure corresponding to version $v$. Depending on the dynamic string data structure we obtain different time-space trade-offs for the random access problem. A balanced binary search tree implements a dynamic string data structure using $O(\log n)$ time for all operations. Since binary search trees are constant degree pointer data structures a classic transformation by Driscoll et al.~\cite{JSST1989} immediately implies an $O(\log n)$ time solution for access. Since each persistent update to the binary search tree incurs  $O(\log n)$ space overhead this leads to a total space of $O(n \log n)$. With a more careful implementation of binary search trees the space can be improved to $O(n)$~\cite{JSST1989, ST86}.

Maintaining a dynamic string (often called the list representation or list indexing problem~\cite{FS1989, Dietz1989}) is well-studied and closely connected to the partial sums problem. Dietz~\cite{Dietz1989} presented the first solution achieving $O(\log n/ \log \log n)$ time for access and updates  and Fredman and Saks~\cite{FS1989} showed in their seminal paper on cell probe complexity that this bound is optimal. Several variations and extension have been proposed~\cite{BCPS2017,BCCGSVV2018,HSS2011,PT2014,RRR2001,PD2006,Fenwick1994}. However, all of these solutions rely on word RAM techniques and therefore incur an overhead of $\Omega(\log \log n)$ time to make them persistent~\cite{Die89} thus leading to a solution to the random access problem with query time $\Theta(\log n)$.

\paragraph{Compressed Representations}
The classic Lempel-Ziv compression scheme (LZ77)~\cite{ZL1977} compresses an input string $S$ by parsing $S$ into $z$ substrings  $f_1f_2\ldots f_z$, called \emph{phrases}, in a greedy left-to-right order. Each phrase is either the first occurrence of a character or the longest substring that has at least one occurrence starting to the left of the phrase. By replacing each phrase by a reference to the previous occurrences we obtain a compressed representation of the string of length $O(z)$. 

We can use LZ77 compression to efficiently store all versions of the persistent string in the random access problem. To do so, we write all the strings represented in the version tree $T$ and concatenate them in order of increasing depth in $T$. The string represented by a node $v$ can be formed from the string of the parent of $v$ by at most $3$ substrings, namely, the substrings before and after the edit operation and a new character in case of a replace or insert operation. Since we concatenate the strings in increasing depth it follows that the greedy LZ77 parsing uses at most $z = O(n)$ phrases.

To solve the random access problem on the persistent string we can convert the LZ77 compressed representation into a small grammar representation and then apply efficient random access results for grammars. Converting the LZ77 compressed string leads to a grammar of size $O(z \log (N/z)) = O(n \log n)$~\cite{CLLPPSS2005, Rytter2003}. Using the best known trade-offs for random access in grammars, this  leads to solutions using either $O(n\log n)$ space and $O(\log N) = O(\log n)$ query time~\cite{BLRSSW2015} or $O(n\log^{1+\varepsilon} n)$ space and $O(\log N/\log \log N) = O(\log n/ \log \log n)$ query  time~\cite{BCPT2014,GZL2018}. We note that both of these results inherently need superlinear space for the conversion from LZ77 to grammars~\cite{CLLPPSS2005}. Furthermore, Verbin and Yu~\cite{VY2013} showed that the latter query time is optimal. More precisely, they proved that any representation of an LZ77 compressed string using $z \log^{O(1)} N = n \log^{O(1)} n$ space must use $\Omega(\log N/\log \log N) = \Omega(\log n/ \log \log n)$ time.

A related simpler model of compression is \emph{relative compression}~\cite{SS1978,Storer1982} (see also~\cite{COMNVW2012, DJSS2014, HPZ2011, KPZ2010, KPZ2011, LDKTW1998,LDK1999,BCCGSVV2018}), where we explicitly store a single reference string and compress a collection of strings as substrings of the reference string. A similar compression model is also proposed in~\cite{GKNPS2013,MNSV2010,Navarro2012,Navarro2019}. The relative compression model compresses efficiently if each string is the result of applying a small number of edits to the base string. In contrast, using persistent strings we can compress efficiently if each string is the result of editing \emph{any} other string in the collection. 

\subsection{Outline}
We present the reduction from random access to segment selection in Section~\ref{sec:reduction} and our solution to the slab selection problem in Section~\ref{sec:slabselection}.
We then use our slab selection data structure in our full data structure for the segment selection problem in Section~\ref{sec:segmentselection}. Plugging this into our reduction leads to Theorem~\ref{thm:mainupperbound}. We show the lower bounds in Section~\ref{sec:lowerbound} and conclude with some open problems in Section~\ref{sec:conclusion}.

\section{Reducing Random Access to Segment Selection}\label{sec:reduction} 
In this section we show how to reduce the random access problem to the following natural geometric selection problem on line segments. Let $L$ be a set of $n$ horizontal line segments in the plane. The \emph{segment selection problem} is to preprocess $L$ to support the operation: 
\begin{itemize}
	\item $\SegmentSelect(i,j)$: return the $j$th smallest segment (the segment with the $j$th smallest $y$-coordinate) among the segments crossing the vertical line through $x$-coordinate $i$.  
\end{itemize}
We will view  the $x$-axis as a timeline and  often refer to an $x$-coordinate $i$ as time $i$. We will show how to efficiently solve the segment selection problem in the following sections. Our reduction from the random access problem works as follows. Let $T$ be an instance of the random access problem with $n$ nodes and assume wlog. that $T$ contains no edges labeled by $\Replace$. We can do so since we can always convert edges labeled by $\Replace$ into two edges labeled by a $\Delete$ and $\Insert$, thus at most doubling the size of the instance. We construct an instance $L$ of segment selection as follows.

\begin{figure}[t]
\centering
  \includegraphics[scale=0.43]{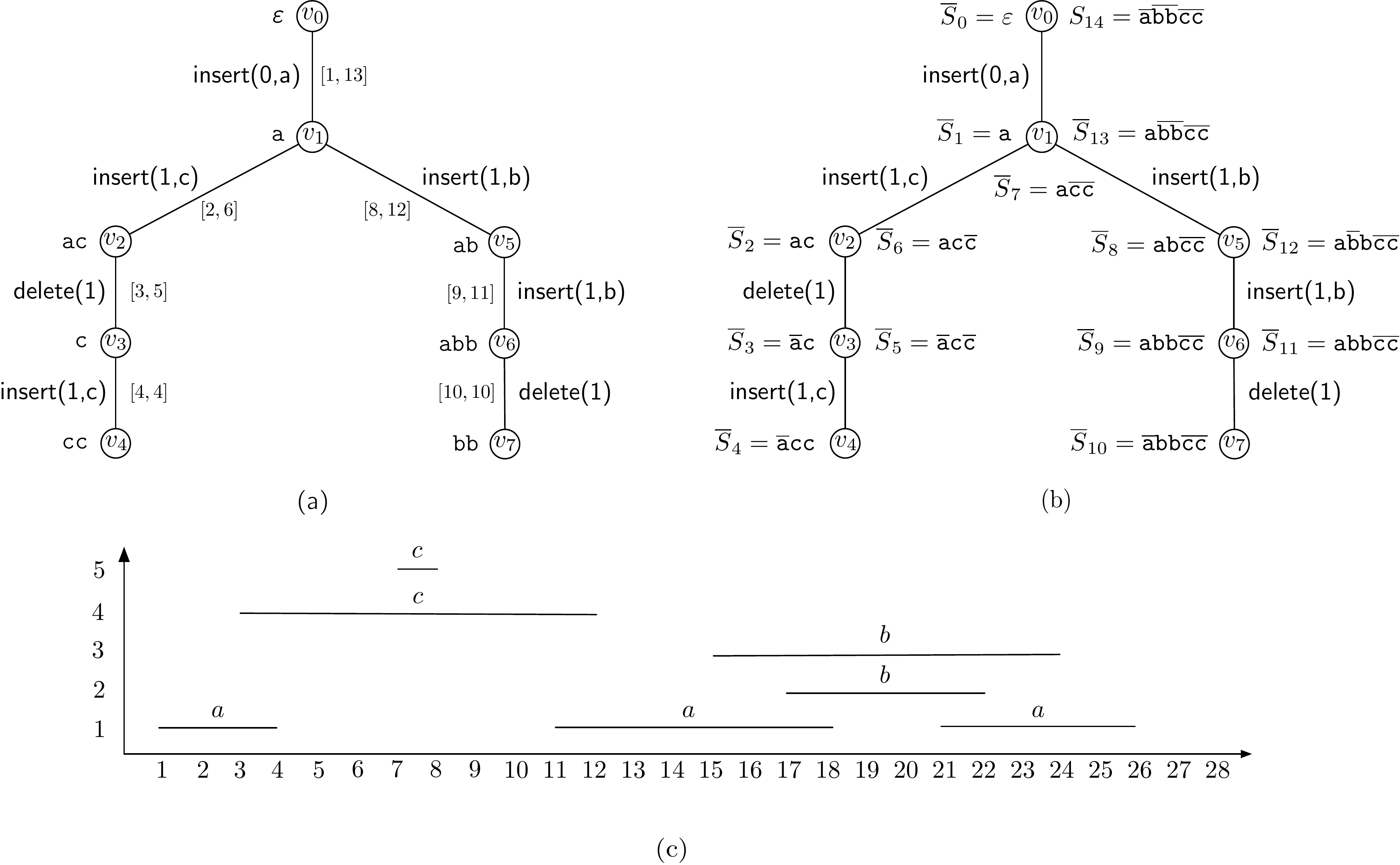}
  \caption{\label{fig:reduction} (a) A persistent string representing the collection $\{\texttt{$\epsilon$}, \texttt{a}, \texttt{ac}, \texttt{c}, \texttt{cc}, \texttt{ab}, \texttt{abb}, \texttt{bb}\}$. The interval $I(e)$ is shown for each edge. (b) The marked strings of (a). The insertion edges are unmarked in the following intervals: $(v_0,v_1)$ in  $[1,2] \cup [6,9]\cup [11,13]$, $(v_1, v_2)$ in $[2,6]$, $(v_3, v_4)$ in $[4,4]$, $(v_1, v_5)$ in $[8,12]$, and $(v_5,v_6)$ in $[9,11]$.  
  (c) The segment selection instance corresponding to (a). The range of  $x$-coordinates of segments are obtained by converting each interval $[i,j]$ above to $[2i-1, 2j]$. 
}   
\end{figure}

We first perform an Euler tour~\cite{TV1984} of $T$ to construct a sequence $\overline{S}_0, \ldots, \overline{S}_{2n-2}$ of strings corresponding to each time we meet a node in the Euler tour.  We call these strings \emph{marked strings} since each character in them will be either \emph{marked} or \emph{unmarked}. The marked strings are defined as follows.
String $\overline{S}_0$ is the empty string. Suppose we have constructed  $\overline{S}_0, \ldots, \overline{S}_{\ell-1}$ and let $e$ be the edge visited at time $\ell$ in the Euler tour. We construct $\overline{S}_\ell$ from $\overline{S}_{\ell-1}$ according to the following cases (see Figure~\ref{fig:reduction}(b) for an example).


\begin{description}
	\item[Case 1: Insertions] Suppose that $e$ is labeled $\Insert(i,\alpha)$. If we traverse $e$ in the downward direction, we insert character $\alpha$ as an unmarked character in $\overline{S}_{\ell-1}$ immediately to the right of the $i$th unmarked character to get $\overline{S}_\ell$. If we traverse $e$ in the upwards direction we mark the same character that was inserted as an unmarked character in the earlier downwards traversal of $e$.
	\item[Case 2: Deletions] Suppose that $e$ is labeled $\Delete(i)$. If we traverse $e$ in the downward direction, we mark the $i$th unmarked character in $\overline{S}_{\ell-1}$ to get $\overline{S}_{\ell}$. If we traverse $e$ in the upward direction, we unmark the same character that was marked in the downward traversal of $e$. 
\end{description}
Note that an insertion edge $e$ traversed in the downward direction at time $\ell$ results in an insertion of a character, denoted $\Char(e)$, in $\overline{S}_\ell$. Since $\Char(e)$ is never removed from subsequent marked strings it appears in all subsequent strings $\overline{S}_{\ell}, \ldots, \overline{S}_{2n-2}$, but changes between being marked and unmarked. If a deletion edge $e'$ changes $\Char(e)$ from unmarked to marked we say that $e'$ \emph{deletes} $\Char(e)$. 



For an edge $e$ in $T$, let $\First(e)$ and $\Last(e)$ denote the first and last time, respectively, we visit $v$ in the  Euler tour of $T$, and let $I(e) = [\First(e), \Last(e)-1]$ denote the interval of $e$. 

\begin{lemma}\label{lem:markedstring} 
Let $e$ be an insertion edge in $T$ that is traversed in the downward direction at time $\ell$ and let $e_1, \ldots, e_m$ be the edges in $T(v)$ that delete $\Char(e)$. Then, $\Char(e)$ is unmarked in all strings $\overline{S}_i$ where $i$ is an integer in the interval $I(e) \setminus \left(I(e_1) \cup \cdots I(e_m) \right)$ and  marked in $\overline{S}_i$ for all other integers $i$ in $[\ell, 2n-2]$.
\end{lemma}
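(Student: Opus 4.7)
The plan is to prove the claim by induction on $i \in [\ell, 2n-2]$, tracking the marked/unmarked status of $\Char(e)$ through the Euler tour. The key observation is that after $\Char(e)$ is inserted at time $\ell$, its status can change only at a time step whose edge traversal is either of $e$ itself or of one of $e_1,\ldots,e_m$; any other edge traversal (whether an insertion or a deletion of a different character) leaves $\Char(e)$ untouched. So the proof reduces to a case analysis at each time step, synchronizing the status flips of $\Char(e)$ with the changes in set membership of $i$ in $I(e) \setminus \bigcup_k I(e_k)$.

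For the base case $i = \ell$: the downward traversal of $e$ inserts $\Char(e)$ as an unmarked character. Each $e_k$ lies in the subtree rooted at the lower endpoint of $e$, so $\First(e_k) > \First(e) = \ell$, and therefore $\ell \in I(e) \setminus \bigcup_k I(e_k)$, matching the claim. For the inductive step, let $e'$ be the edge traversed at time $i$. If $e' \notin \{e, e_1,\ldots,e_m\}$, the status of $\Char(e)$ is unchanged and none of the boundary times $\First(e), \Last(e), \First(e_k), \Last(e_k)$ equals $i$, so the membership of $i$ and $i-1$ in $I(e) \setminus \bigcup_k I(e_k)$ coincide, and the invariant is preserved. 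If $e' = e_k$ traversed downward, then $i = \First(e_k)$: by the inductive hypothesis $\Char(e)$ was unmarked (since $i - 1 \in I(e) \setminus \bigcup_j I(e_j)$, as shown below), the downward traversal marks it, and correspondingly $i$ enters $I(e_k)$. If $e' = e_k$ traversed upward, then $i = \Last(e_k) \notin I(e_k)$; by the inductive hypothesis at $i - 1 = \Last(e_k) - 1 \in I(e_k)$ the character was marked, and the upward traversal unmarks it. Finally, if $e' = e$ traversed upward, then $i = \Last(e) \notin I(e)$ and the traversal marks $\Char(e)$.

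The main subtlety, and the only structural fact beyond the case analysis, is that the intervals $I(e_1), \ldots, I(e_m)$ are pairwise disjoint subintervals of $I(e)$. To see this, suppose $e_j$ were a proper ancestor of some $e_k$ in $T$; then $I(e_k) \subset I(e_j)$ as Euler tour intervals, and at the downward traversal of $e_k$ the character $\Char(e)$ would already be marked by the earlier downward traversal of $e_j$, so $e_k$ would mark the $i$th unmarked character of $\overline{S}_{i-1}$, which cannot be $\Char(e)$, contradicting the assumption that $e_k$ deletes $\Char(e)$. Hence the $I(e_k)$ are pairwise disjoint. Disjointness, together with the basic nesting/separation property of Euler tour intervals, guarantees that at $i = \First(e_k)$ we have $i - 1 \notin I(e_j)$ for any $j$ (and $i - 1 \in I(e)$ since $I(e_k) \subset I(e)$), validating the inductive step above. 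With this observation in place, all four cases align status flips with set-membership flips, and the invariant propagates to $i = 2n-2$, completing the proof.
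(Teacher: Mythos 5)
Your proof is correct and takes essentially the same route as the paper's, whose two-sentence argument simply asserts that $e$ contributes unmarked status on $I(e)$ and each deleting edge contributes marked status on its own interval; your induction over the Euler tour fills in exactly those details. The pairwise-disjointness (no-nesting) of $I(e_1),\ldots,I(e_m)$ that you prove explicitly is left implicit in the paper, and it is indeed the fact needed to ensure the unmarking at $\Last(e_k)$ falls outside every $I(e_j)$.
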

\begin{proof}
    We have that $\Char(e)$ appears in $S_\ell, \ldots, S_{2n-2}$. The edge $e$ inserts $\Char(e)$ as unmarked in the interval $I(e)$ and each edge $e'$ that deletes $\Char(e)$, marks it in the interval~$I(e')$. 
\end{proof}
For instance, consider $e = (v_0, v_1)$ in Figure~\ref{fig:reduction}(a) that inserts an \texttt{a} which is then deleted by $e_1 = (v_3, v_2)$ and $e_2 = (v_7, v_6)$. Thus, \texttt{a} appears in the interval $[1, 13]$ and is unmarked in $I(e) \setminus (I(e_1) \cup I(e_2)) = [1,13] \setminus ([3,5] \cup [10,10]) = [1,2] \cup [6,9] \cup [11, 13]$. 

For a node $v$ in $T$, let $\Start(v)=\First((\parent(v),v))$
denote the first time we meet $v$ in the Euler tour of $T$. For the root $r$ we define $\Start(r)=0$.
   
\begin{lemma}\label{lem:markedstringrelation}
	For any $v$, the concatenation of the unmarked characters in $\overline{S}_{\Start(v)}$ is $S(v)$. 
\end{lemma}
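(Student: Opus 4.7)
The plan is to strengthen the lemma and prove it by induction on Euler-tour time. Let $u_\ell$ denote the node that the tour sits at after processing the $\ell$th edge, with $u_0$ equal to the root $r$. I will prove the invariant that for every $\ell \in \{0,1,\ldots,2n-2\}$ the concatenation of unmarked characters of $\overline{S}_\ell$ equals $S(u_\ell)$. The lemma is immediate from this, since $\Start(v)$ is by definition the first time the Euler tour arrives at $v$, so $u_{\Start(v)} = v$ (and $\overline{S}_0 = \varepsilon = S(r)$ takes care of the root).

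The base case $\ell = 0$ is immediate. For the inductive step, let $e$ be the edge processed at time $\ell$. If $e$ is traversed downward, the construction applies exactly the edit labelling $e$ to the unmarked characters of $\overline{S}_{\ell-1}$ (via the $i$th unmarked character); by induction these spell $S(u_{\ell-1})$, and the resulting unmarked sequence of $\overline{S}_\ell$ is $S(u_\ell)$. So both downward cases are direct.

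For the upward cases, let $e = (p,c)$ be traversed from $c$ back to $p$, so $u_{\ell-1} = c$ and $u_\ell = p$, and the construction toggles the mark on the specific character instance that was inserted or marked by the earlier downward traversal of $e$. The key observation is that between times $\First(e)$ and $\Last(e)-1$ the Euler tour fully traverses the subtree $T(c)$, so every edit performed on the way down an internal edge of this subtree is reversed on the way up; consequently the multiset of unmarked character \emph{instances} and their left-to-right order at time $\ell-1$ coincides with that at time $\First(e)$. Combined with Lemma~\ref{lem:markedstring}---which gives that $\Char(e)$ is unmarked at time $\ell-1$ when $e$ is an insertion edge, since $\ell-1 = \Last(e)-1 \in I(e)$ and lies outside every $I(e_k)$ for the deleting edges $e_k \in T(c)$ (their intervals are strictly contained in $I(e)$)---this shows that in the upward insertion case, $\Char(e)$ occupies position $i+1$ of the unmarked sequence $S(c)$ at time $\ell-1$, so marking it yields $S(c)$ with position $i+1$ removed, which is exactly $S(p)$. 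The upward deletion case is symmetric: the character marked at $\First(e)$ is untouched by any operation inside $T(c)$ (all such operations act only on unmarked characters and on freshly inserted ones), so it still sits at the same gap in the unmarked sequence, and unmarking it at time $\ell$ reinserts it at unmarked position $i$, undoing the deletion.

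The main obstacle is formalising the identity-preservation sub-claim about the subtree traversal: that not just the \emph{string} $S(c)$ but the very tokens making up its unmarked occurrence at time $\First(e)$ reappear at time $\Last(e)-1$ in the same relative order. This is routine---it follows by a nested induction on the size of the Euler subtour, noting that each downward operation inside $T(c)$ is cancelled by its matching upward operation---but it is the step where the bookkeeping must be done carefully, since once it is in place the rest of the argument is a clean four-case verification against Lemma~\ref{lem:markedstring}.
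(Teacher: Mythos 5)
Your proof is correct, but it takes a genuinely different route from the paper's. The paper argues directly and non-inductively: it applies Lemma~\ref{lem:markedstring} at the single time $\Start(v)$ to characterize the unmarked characters of $\overline{S}_{\Start(v)}$ as exactly those inserted on edges of the root-to-$v$ path and not deleted by any edge on that path, observes that by definition these are precisely the characters of $S(v)$, and disposes of the ordering in one sentence (characters are placed by insertion order and never reordered). You instead strengthen the statement to an invariant holding at \emph{every} Euler-tour time $\ell$ (the unmarked characters of $\overline{S}_\ell$ spell $S(u_\ell)$, not just at first visits) and prove it by induction along the tour, simulating each edit; Lemma~\ref{lem:markedstring} enters only as a sub-step of the upward case, and the real work is your identity-preservation claim that the unmarked token sequence at $\Last(e)-1$ coincides with that at $\First(e)$. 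That claim is true, and your justification (every edge interior to the subtree is traversed down and back up within $(\First(e),\Last(e))$, so its toggle is cancelled, and the relative order of tokens in the marked string never changes) is the right one; note you could also get it directly from Lemma~\ref{lem:markedstring}, since every interval $I(e')$ or $I(e'_k)$ is either nested strictly inside $(\First(e),\Last(e)-1)$ or contains both endpoints of $I(e)$, which would spare you the nested induction. What the two approaches buy: the paper's argument is shorter because Lemma~\ref{lem:markedstring} already encodes the whole marking history, but it leaves the order-preservation point as an assertion; your inductive simulation is longer and needs the bookkeeping sub-claim, but it is more self-contained, makes the ordering explicit, and establishes the stronger fact that the correspondence holds at all times of the tour, not only at $\Start(v)$.
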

\begin{proof} 
From the Lemma~\ref{lem:markedstring}, the unmarked characters in $\overline{S}_{\Start(v)}$ are those which have been inserted at an edge $(w, \parent(w))$ where $w$ is ancestor of $v$ and have not been marked by any deletion edge in between. By definition these are the same characters as $S(v)$. From the insertion ordering of the characters in the marked strings it follows that characters in     $\overline{S}_{\Start(v)}$ and $S(v)$ appear in the same order.  	
\end{proof}
Next, we construct a set of labeled line segments $L$ from $\overline{S}_{2n-2}$ as follows. Note that $\overline{S}_{2n-2}$ consists of all of the (marked) characters appearing at insertion edges in $T$. For each insertion edge $e$, define $\Pos(e)$ to be the position of $\Char(e)$ is $\overline{S}_{2n-2}$. For instance, in Figure~\ref{fig:reduction}(a) $\Pos((v_1, v_0)) = 1$ since \texttt{a} is at position $1$ in $\overline{S}_{14}$. For each insertion edge $e$ in $T$ that is deleted by edges $e_1, \ldots, e_m$, we construct $m+1$ horizontal line segments corresponding to the $m+1$ time intervals where $\Char(e)$ is unmarked. These $m+1$ segments are all labeled by $\Char(e)$ and all have $y$-coordinate $\Pos(e)$.  For an interval $[i,j]$ the corresponding segment has $x$-coordinates $2i-1$ and $2j$. We use $2i-1$ and $2j$ to ensure that all segments have length at least one and that no two segments share an endpoint. See Figure~\ref{fig:reduction}(b). For instance, the insertion edge $e = (v_0, v_1)$ has position $1$ and two deletion edges producing the $3$ segments in Figure~\ref{fig:reduction}(b) labeled \texttt{a}. We have the following correspondence between $T$ and $L$. 

\begin{lemma}\label{lem:segments} Let $T$ be a version tree and let $L$ be the corresponding instance of the segment selection. Then, $S(v)$ is the concatenation labels of the segments crossing the vertical line at time $2\cdot \Start(v)$ ordered by increasing $y$-coordinate.
\end{lemma}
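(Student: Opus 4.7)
The plan is to combine Lemma~\ref{lem:markedstringrelation} with a careful analysis of the segment construction. By Lemma~\ref{lem:markedstringrelation}, $S(v)$ is the concatenation of the unmarked characters of $\overline{S}_{\Start(v)}$ read left to right. Thus it suffices to exhibit a label-preserving bijection between these unmarked characters and the segments of $L$ crossing the vertical line $x = 2\Start(v)$ which maps left-to-right order to increasing $y$-coordinate.

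First, I would establish the bijection on the set level. Fix an insertion edge $e$ and let $e_1,\dots,e_m$ be the deletion edges that delete $\Char(e)$. Because the intervals of edges in the Euler tour form a laminar family and nested deletions would violate Case 2 (the character would already be marked), the intervals $I(e_1),\ldots,I(e_m)$ are pairwise disjoint subintervals of $I(e)$, so $I(e)\setminus\bigcup_i I(e_i)$ decomposes into $m+1$ maximal integer subintervals $[i_k,j_k]$. The construction places one segment per subinterval with $x$-range $[2i_k-1,2j_k]$ and label $\Char(e)$. Since $2\Start(v)$ is even, $2i_k-1\leq 2\Start(v)\leq 2j_k$ iff $i_k\leq \Start(v)\leq j_k$, so exactly one segment associated with $e$ crosses $x=2\Start(v)$ precisely when $\Start(v)\in I(e)\setminus\bigcup_i I(e_i)$, which by Lemma~\ref{lem:markedstring} is equivalent to $\Char(e)$ being unmarked in $\overline{S}_{\Start(v)}$. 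The labels evidently agree.

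Next, I would verify the ordering. All segments associated with edge $e$ have $y$-coordinate $\Pos(e)$, the position of $\Char(e)$ in $\overline{S}_{2n-2}$, so ordering the crossing segments by $y$-coordinate amounts to ordering the contributing insertion edges by $\Pos$. The key observation is that once $\Char(e)$ appears in some $\overline{S}_\ell$, its position relative to any other character already present never changes: Case 1 inserts a new character at a specific slot but shifts later positions uniformly and does not reorder existing ones, while Case 2 only toggles the marked status. Consequently, for any two edges $e,e'$ both contributing to $\overline{S}_{\Start(v)}$, $\Char(e)$ precedes $\Char(e')$ in $\overline{S}_{\Start(v)}$ iff it precedes in $\overline{S}_{2n-2}$, iff $\Pos(e)<\Pos(e')$. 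Combined with Lemma~\ref{lem:markedstringrelation} this yields the claim.

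I expect the main obstacle to be the invariance-of-relative-order claim, which underlies the use of $\Pos(e)$ as the $y$-coordinate. Formally one should prove it by induction on $\ell$: the base case $\overline{S}_0=\varepsilon$ is trivial, and for the inductive step one checks in each of Cases 1 and 2 that the positions of all previously existing characters are either unchanged or uniformly shifted by one, so pairwise order is preserved. Once this monotonicity lemma is in place, combining it with the set-level bijection of the previous paragraph completes the proof.
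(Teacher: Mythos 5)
Your proposal is correct and follows essentially the same route as the paper: establish that the segments crossing $x = 2\cdot\Start(v)$ are exactly those corresponding to unmarked characters of $\overline{S}_{\Start(v)}$ (via the integer argument $i\leq \Start(v)\leq j$ iff $2i-1\leq 2\Start(v)\leq 2j$ and Lemma~\ref{lem:markedstring}), that the $y$-order given by $\Pos(e)$ agrees with the left-to-right order of those characters, and then invoke Lemma~\ref{lem:markedstringrelation}. You simply spell out in more detail two points the paper leaves implicit (disjointness of the deletion intervals and the order-preservation induction), which is fine.
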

\begin{proof}
	We first show that the vertical line at $2\cdot \Start(v)$ crosses exactly the segments corresponding to unmarked characters in $\overline{S}_{\Start(v)}$. By the definition of the intervals and the segments it is enough to show that $i\leq  \Start(v) \leq j$ if and only if $2i-1\leq  2\cdot \Start(v) \leq 2j$. This follows immediately from the fact that $i$, $j$, and $\Start(v)$ are integers. By the definition of $\Pos(e)$ the order of the segments is the same as the order of the corresponding unmarked characters in $\overline{S}_{\Start(v)}$. Thus the segments crossing the vertical line at time $2\cdot \Start(v)$ in increasing order is the concatenation of the unmarked characters in $\overline{S}_{\Start(v)}$. By Lemma~\ref{lem:markedstringrelation} this is $S(v)$.
\end{proof}

Each edge in $T$ increases the number of segments in $L$ by at most $1$ and hence $L$ contains at most $n-1$ segments. To answer $\Access(v,j)$ on $T$ we compute $\SegmentSelect(2\cdot  \Start(v), j)$ on $L$ and return the corresponding label. By Lemma~\ref{lem:segments} this correctly returns $S(v)[j]$. In summary, we have the following result. 

\begin{lemma}\label{lem:reduction}
Given a solution to the segment selection problem on $n$ segments that uses $s(n)$ space and answers queries in $t(n)$ time, we can solve the random access problem in $O(s(n))$ space and $O(t(n))$ time. 
\end{lemma}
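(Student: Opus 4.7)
The plan is to package the construction from Section~\ref{sec:reduction} as a black-box reduction. Given the version tree $T$ with $n$ nodes, I would first (if necessary) replace every $\Replace$ edge by a $\Delete$–$\Insert$ pair, doubling the size, and then run the Euler tour and marked-string construction to produce the labeled segment set $L$ together with the per-node start times $\Start(v)$. I would store (i) the segment selection data structure on $L$ and (ii) an auxiliary array indexed by nodes of $T$ giving $\Start(v)$ and a reference, at each segment, to its label $\Char(e)$.

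Next I would verify the space bound. Each edge of $T$ contributes at most one new segment to $L$: a downward traversal of an insertion edge creates the initial unmarked interval, and each downward traversal of a deletion edge cuts an existing interval into one more piece. Hence $|L| \le n-1$, and the segment selection structure costs $O(s(n))$. The auxiliary arrays for $\Start(\cdot)$ and segment labels are $O(n)$, which is absorbed into $O(s(n))$ since any reasonable segment selection structure uses at least linear space.

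For the query, I would answer $\Access(v,j)$ by first looking up $\Start(v)$ in constant time, issuing a single $\SegmentSelect(2\cdot \Start(v),\, j)$ query on $L$, and returning the stored label of the resulting segment. Correctness is immediate from Lemma~\ref{lem:segments}, which guarantees that the segments crossing the vertical line at $x = 2\cdot \Start(v)$, ordered by $y$-coordinate, are labeled by the characters of $S(v)$ in order. The total running time is $O(t(n))$ plus constant-time lookups, giving the claimed bound.

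The only mild obstacle is confirming that the transformation of $\Replace$ edges does not spoil the constants: since it at most doubles the number of nodes and segments, the invoked bounds become $s(2n)$ and $t(2n)$, which are $O(s(n))$ and $O(t(n))$ under the usual mild regularity assumptions on $s$ and $t$ (any polynomial or near-linear function, as in our target bounds $s(n)=O(n)$ and $t(n)=O(\log n/\log\log n)$). With this, the lemma follows directly by composing the reduction with the assumed segment selection solution.
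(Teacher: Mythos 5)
Your proposal is correct and follows essentially the same route as the paper: construct the segment set $L$ via the Euler-tour/marked-string construction, bound $|L|$ by noting each edge contributes at most one segment, and answer $\Access(v,j)$ with a single $\SegmentSelect(2\cdot\Start(v),j)$ query whose correctness is Lemma~\ref{lem:segments}. Your extra remarks on storing $\Start(\cdot)$, segment labels, and the constant-factor effect of the $\Replace$ conversion are fine and only make explicit what the paper leaves implicit.
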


\section{Selection in Slabs}\label{sec:slabselection}
In this section, we introduce the \emph{slab selection problem} and present an efficient solution. Our data structure will be a key component in our full solution to the segment selection problem that we present in the next section.  As before we will view the $x$-axis as a timeline and often refer to an $x$-coordinate $i$ as time $i$.

Let $L$ be a set of $n$ segments given in the following "rank reduced" coordinates. The $x$-coordinates of the segment endpoints are unique integers from the set $\{1, \ldots, 4n\}$ and the $y$-coordinates are unique integers in $\{1,\ldots, n\}$. In particular, at every time at most one segment starts or ends. Note that the condition on the $x$-axis is satisfied in the reduction from Section~\ref{sec:reduction}. To satisfy the condition on the $y$-axis, we sort the segments according to their $y$-coordinate breaking ties according to their starting point on the $x$-axis, and use their rank in this ordering as $y$-coordinate. Note that this maintains the ordering among segments crossing the vertical at any time $i$. 

We partition the segments $L$ into $\Delta = O(\log^\varepsilon n)$, where $0 < \varepsilon < 1$, infinite horizontal bands $s_1, \ldots, s_\Delta$, called \emph{slabs}. Each slab consists of $\ceil{n/\Delta}$ segments, except possibly $s_\Delta$ which may be smaller. The \emph{slab selection problem} is to compactly represent $L$ to support the following queries: 
\begin{itemize}
	\item $\SlabSum(i, j)$: return the total number of segments in slabs $s_1, \ldots, s_j$ crossing the vertical line through $x$-coordinate $i$. 
	\item $\SlabSelect(i,j)$: return the smallest $k$ such that $\SlabSum(i,k) \geq j$. 
\end{itemize}
The goal of this section is to construct a data structure for the slab selection problem that uses $O(n \log \log n)$ bits of space and answers $\SlabSum$ and $\SlabSelect$ queries in constant time. Note that if we explicitly represent each of the $n$ segments, e.g., by their two $x$-coordinate endpoints and their $y$-coordinate, we need $\Omega(n \log n)$ bits even if we ignore how to support queries. We present a compact representation of the collection of segments that improves the space to $O(n \log \log n)$ bits and simultaneously achieves constant time queries.

\begin{figure}[t]
\centering
  \includegraphics[scale=0.45]{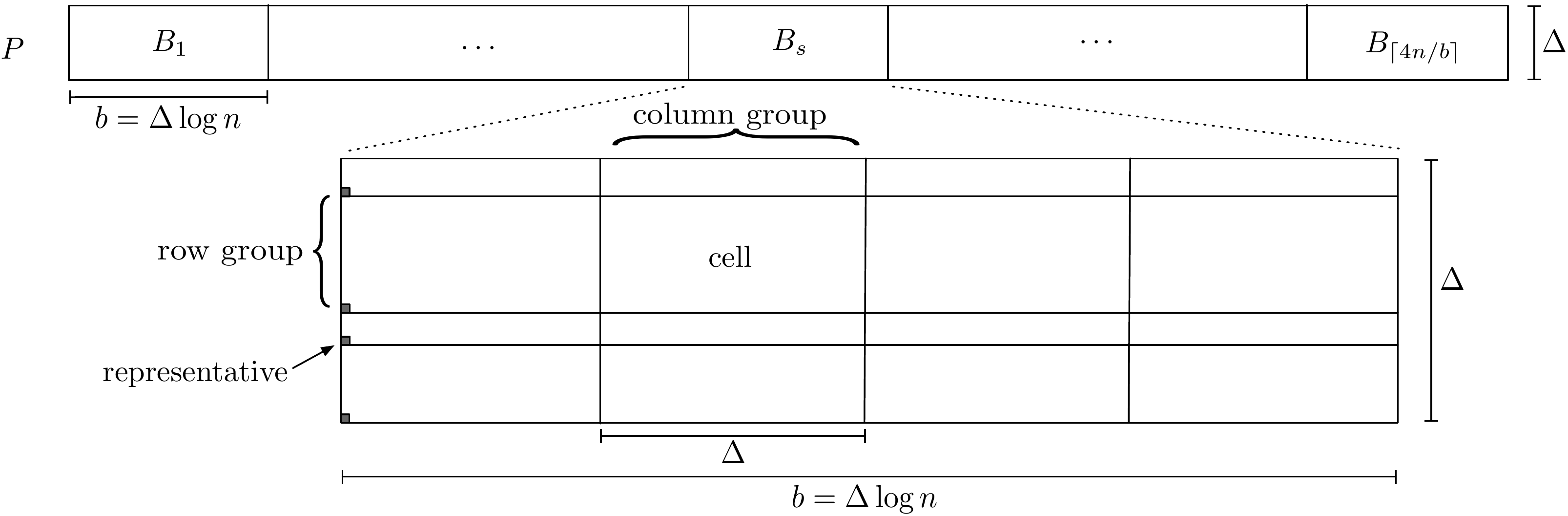}
  \caption{\label{fig:grid} The grid $P$ partitioned into blocks and a block of $P$ partitioned into column groups, row groups, and cells.}   
\end{figure}

Before presenting our data structure, we first convert the problem to a problem on a grid of prefix sums, define a decomposition on the grid, and show some key properties that we will need in our solution. 
 We define a grid $P$ of integers arranged in $4n$ columns and $\Delta$ rows such that the entries in column $i$ represent the prefix sums of the number segments crossing at time $i$. 
We use $P(i,j)$ to denote the entry in column $i$ and row $j$ in $P$. More precisely, $P(i,j)$ contains the number of segments crossing $i$ in slab $s_1$ to $s_j$.
We have that $\SlabSum(i,j) = P(i,j)$ and $\SlabSelect(i,j)$ corresponds to a predecessor query on column $i$, that is, computing the smallest $k$ such that $P(i,k) \geq j$.  

We decompose $P$ as follows. Let $b = \Delta \log n$. We partition $P$ into \emph{blocks} $B_1, \ldots, B_{\ceil{4n/b}}$ of $b$ consecutive columns. We further partition each block $B$ into groups of consecutive columns and rows called \emph{column groups} and \emph{row groups}, respectively (see Figure~\ref{fig:grid}). The column groups are groups of $\Delta$ consecutive columns and the row groups are defined such that two adjacent rows are in the same row group if their leftmost entries differ by at most $b$. Each rectangular subgrid in $B$ given by the entries that are in the same column group and row group is called a \emph{cell} of $B$. The \emph{representative} of a row group in $B$ is the bottom and leftmost position in the row group. The \emph{representative} of a cell $C$ in $B$ is the representative of the row group of $C$. For any cell $C$ in $B$, we define the \emph{normalized cell}, denoted $\hat{C}$, to be $C$ where all entries have been subtracted by the representative of $C$. We have the following properties of the construction. 

\begin{lemma}\label{lem:grid}
Let $B$ be a block of the grid $P$. We have the following properties.  

	\begin{enumerate}[(i)]
		\item\label{adjrow} Adjacent entries in a row differ by at most $1$. 
	 	\item\label{adjcolumn} Adjacent entries in a column within the same row group differ by at most $2b$.  
		\item\label{distcolumn} Entries in non-adjacent row groups differ by more than $b$.
		\item\label{distrepr} Let $r_j$ be the representative of row group $j$. Then, all entries in the first row of row group $j-1$ and below 
		have values smaller than $r_j$ and all entries in row group $j+1$ and above have values greater than $r_j$.
	\end{enumerate}
\end{lemma}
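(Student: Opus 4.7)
Overall, the plan is to deduce all four parts from two elementary observations about $P$. The first observation is that the $x$-coordinates are rank-reduced, so at each unit step along the $x$-axis at most one segment starts and at most one ends; consequently any quantity of the form ``number of segments in some fixed set of slabs crossing the vertical line at $x=i$'' changes by at most $1$ per column. The second is that, by the definition of row groups within a block of width $b$, every row-group boundary marks a jump in $P(1,\cdot)$ of strictly more than $b$, while consecutive rows in the same row group jump by at most $b$. The strategy in each part is to first establish the desired inequality in the leftmost column of the block (where the row-group definition gives direct control) and then transport it to an arbitrary column $i$ of the block, losing at most $i-1 \leq b-1$ per transport.

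Part (i) is immediate: $P(i+1,j)-P(i,j)$ is exactly the signed change in the number of segments in slabs $s_1,\ldots,s_j$ crossing, which is at most $1$ in absolute value by the first observation. For (ii), combine the same-row-group condition $P(1,j+1)-P(1,j)\leq b$ with the fact that the quantity $P(\cdot,j+1)-P(\cdot,j)$, which equals the number of segments in slab $s_{j+1}$ crossing, changes by at most $1$ per column step; this yields $P(i,j+1)-P(i,j)\leq b+(b-1) < 2b$. For (iv), the $(j{-}1)\mid j$ boundary condition gives $P(1,j_{\mathrm{bot}}-1) < r_j-b$, where $j_{\mathrm{bot}}$ is the bottom row of row group $j$; transporting to column $i$ costs at most $b-1$, so $P(i,j_{\mathrm{bot}}-1) < r_j-1 < r_j$, and monotonicity of $P$ in $j$ propagates this to every row below. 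The symmetric computation at the $j\mid(j{+}1)$ boundary, using $P(1,j_{\mathrm{top}})\geq r_j$ by monotonicity inside row group $j$, handles the rows above.

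Part (iii) is the most delicate, and I expect its slack calculation to be the main obstacle. Non-adjacency of the two row groups containing $j_1 < j_2$ forces at least two distinct row-group boundaries between $j_1$ and $j_2$, each contributing a jump of strictly more than $b$ in the leftmost column, and hence $P(1,j_2)-P(1,j_1) > 2b$. Transporting the quantity $P(\cdot,j_2)-P(\cdot,j_1)$, which again changes by at most $1$ per column, to column $i$ loses at most $b-1$, so the calculation $2b - (b-1) > b$ comes out exactly as required. This is precisely why the claim fails for merely adjacent row groups, where only one boundary jump of magnitude more than $b$ is available and cannot overcome the transport loss of up to $b-1$; getting the constants to line up is the only thing that needs care in the whole argument.
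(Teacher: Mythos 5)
Your proof is correct and follows essentially the same route as the paper's: part (i) from the one-endpoint-per-column property, parts (ii)--(iv) by combining the row-group bounds in the leftmost column with a drift of at most one per column across the at most $b$ columns of the block, plus column monotonicity of the prefix sums for (iv). The only (immaterial) difference is that in (iv) you anchor the bound at the boundary jump between the top row of group $j-1$ and $r_j$ rather than at the representative $r_{j-1}$ as the paper does, which in fact yields the slightly stronger statement that \emph{all} of row group $j-1$ lies below $r_j$.
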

\begin{proof}
\ref{adjrow} At any time at most one segment can start or end, which can only change the prefix sums in a column by $\pm 1$. \ref{adjcolumn} We have that adjacent entries in the leftmost column of the same row group differ by at most $b$. By \ref{adjrow} going left-to-right this difference can increase by at most $1$ in each column. Since $B$ has $b$ columns the difference can be at most $2b$. \ref{distcolumn} Any two entries in the leftmost column in two non-adjacent row groups differ by more than $2b$. Each column contains at most one update and each update can reduce this difference by no more than $1$. Hence, entries in non-adjacent row groups must differ by more than $b$. 
\ref{distrepr} The difference between $r_j$ and $r_{j-1}$ is more than $b$. Consider the first row in row group $j-1$. Since $B$ has $b$ columns it follows from \ref{adjrow} that any entry in this row has value at most $r_{j-1}+b<r_j$. Since the grid contains prefix sums, the values in a column are non-decreasing. Thus, all entries below row group $j-1$ have values smaller $r_j$. Symmetrically,  all entries in the first row of row group $j+1$ have value at least $r_{j+1}-b > r_j$.
\end{proof}

\subsection{Data Structure}
We store several data structures to represent $P$ and support queries. For each block $B$ we store the following. 

\begin{itemize}
\item A predecessor data structure on the representatives of $B$. We use the fusion node structure for constant time predecessor queries on sets of polylogaritmic size due to Fredman and Willard~\cite{FW1993,FW1994}. Since there are at most $\Delta = O(\log^\varepsilon n)$ representatives,  this structure supports queries in constant time and uses  $O(\Delta \log n) = O(b)$ bits of space. 

\item For each cell $C$, we store the leftmost column of the normalized cell $\hat{C}$. By Lemma~\ref{lem:grid}~\ref{adjrow} the first entry in the leftmost column differs from the representative $r$ by at most $b$. By Lemma~\ref{lem:grid}~\ref{adjcolumn} and since the height of $C$ is at most $\Delta$, the remaining entries in the leftmost column differ by at most $2b(\Delta - 1) + b = O(b\Delta)$. We have $\log n$ column groups in $B$ and thus the total height of all cells in $B$ is $\Delta \log n = b$. Therefore,  we can encode all leftmost columns in $O(b \log (b\Delta)) = O(b \log \log n)$ bits.  

\item For each column in $B$ we store the difference from the previous column. We encode this as the number of the slab containing the update and a single bit indicating if the update is the start or end of a segment. This uses $b \ceil{\log \Delta} + 1 = O(b \log \log n)$ bits. 
\end{itemize}
Combined we use $O(b \log \log n)$ bits for a block and thus $O(\frac{n}{b}b \log \log n) = O(n \log \log n)$ bits in total for $P$. 

We will use our data structure to efficiently construct a compact encoding for any normalized cell $\hat{C}$. To do so, we combine the encoding of leftmost column of $\hat{C}$ and the encoding of the column differences/updates in the cell in left to right order.  

We will use tabulation to support the following queries on normalized cells. Given a normalized cell $\hat{C}$ and integers $i$ and $j$, define

\begin{itemize}
	\item $\Access(\hat{C}, i, j)$: return $\hat{C}(i, j)$. 
	\item $\Predecessor(\hat{C},i,j)$: return the smallest $k$ such that $\hat{C}(i,k) \geq j$. 
\end{itemize}
We construct a single global table for each of the queries. The height of $\hat{C}$ is at most $\Delta$, and by the argument above we can encode the leftmost column of $\hat{C}$ with $O(\Delta \log (b\Delta))$ bits. The rest of the columns are encoded by their difference from the previous column. Since the width of $\hat{C}$ is $\Delta$ this uses $O(\Delta \log \Delta)$ bits. Thus the encoding of $\hat{C}$  uses at most $O(\Delta \log (b\Delta)  + \Delta \log \Delta) = O(\Delta \log \log n)$ bits.
For $\Access$ we encode 
the indices $i$ and $j$ using $O(\log \Delta)$ bits, and the answer in $O(\log (b\Delta))$ bits. Thus the total length of the encoding for an $\Access$ query is $O(\Delta \log \log n) + \log \Delta + \log (b\Delta)$ bits.  Hence, we can support $\Access$ in constant time with a table of size $2^{O(\Delta \log \log n + \log \Delta + \log (b\Delta))} = 2^{O(\log^\varepsilon n \log \log n)} = o(n)$ bits (recall that $\varepsilon < 1$). We encode $\Predecessor$ similarly except that the answer to the query can now be encoded in only $O(\log \Delta)$ bits. The total size the entire structure is $O(n \log \log n)$ bits.

\subsection{Supporting Queries}
We show how to implement $\SlabSum(i,j)$ and $\SlabSelect(i,j)$ in constant time. For both queries we find the block of $P$ containing column $i$ and the column group in the block corresponding to $i$. Since the blocks and column groups are evenly spaced this takes constant time. Let $B$ be the block and let $r_1, \ldots, r_m$ be the sequence of representatives in $B$ in increasing $y$-order. We then compute the predecessor $r_\ell$ of $j$ among the representatives in constant time using the fusion node structure. This identifies the cell $C_\ell$ containing entry $(i,j)$. To answer $\SlabSum(i,j)$, we compute the position $(i', j')$ in $C_\ell$ corresponding to $(i, j)$ and then compute the answer  as 
\[
\Access(\hat{C_\ell},i', j') + r_\ell. 
\]
This correctly returns the value of $C(i,j)$ since $\hat{C_\ell}$ is normalized wrt $r_\ell$.

To answer $\SlabSelect(i,j)$, we also consider the adjacent cells above and below $C_\ell$, denoted $C_{\ell - 1}$ and  $C_{\ell + 1}$, respectively. Since $r_\ell$ is the predecessor of $j$ we have that $r_\ell \leq j < r_{\ell+1}$. By Lemma~\ref{lem:grid}~\ref{distcolumn}, entries in row groups below row group $\ell-1$ have values smaller than $r_\ell$ and entries in row groups above row group $\ell+1$ have values greater than $r_{\ell+1}$. 
Hence, the entry in column $i$ containing the predecessor of $j$ must be either in $C_{\ell - 1}$, $C_{\ell}$, or $C_{\ell + 1}$. We can determine the correct cell in constant time using $\SlabSum$ queries on the topmost row of each of these cells. The correct cell is the lowest of these for which the $\SlabSum$ query returns a value of at least $j$. Let $C$ denote the correct cell and let $j''$ be the topmost row in $B$ in the row group immediately below $C$. We compute the answer as 
\[
\Predecessor(\hat{C}, i', j- \SlabSum(i, j'')) + j''. 
\]   
Both queries take constant time. In summary we have shown the following result. 

\begin{lemma}\label{lem:slabselection}
	Let $L$ be a set of $n$ segments partitioned into $O(\log^\epsilon n)$ horizontal slabs. Then, we can solve the slab selection problem using $O(n \log \log n)$ bits of space and constant query time. 
\end{lemma}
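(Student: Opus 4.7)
The plan is to verify the bounds claimed in the statement by assembling the components built in this section into a single data structure and analysing its space and query time. I would first reframe the problem: rather than maintain segments directly, build the prefix-sum grid $P$ with $4n$ columns and $\Delta = O(\log^\varepsilon n)$ rows, so that $\SlabSum(i,j) = P(i,j)$ and $\SlabSelect(i,j)$ is exactly a predecessor query on column $i$ of $P$. This reduces both queries to local lookups on $P$.

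Next, I would decompose $P$ into blocks of $b = \Delta \log n$ columns, each further subdivided into column groups of width $\Delta$ and into row groups chosen so that adjacent representatives differ by at most $b$. The key structural payoff is Lemma~\ref{lem:grid}: (i) row-adjacent entries differ by $\leq 1$; (ii) column-adjacent entries in the same row group differ by $\leq 2b$; (iii) entries in non-adjacent row groups differ by more than $b$; and (iv) the predecessor of $j$ lies in one of three vertically consecutive cells. The information I would store per block is exactly (a) a constant-time fusion-node predecessor structure over the $\leq \Delta$ representatives, (b) the leftmost column of each normalized cell $\hat{C}$, whose entries have magnitude $O(b\Delta)$, and (c) the sequence of $b$ signed "update" tokens identifying the slab of change in each successive column. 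A routine bit count gives $O(b \log \log n)$ bits per block and $O(n \log \log n)$ bits in total.

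To obtain constant-time access into a cell, I would precompute two global tables indexed by the concatenation of a normalized cell's encoding (of size $O(\Delta \log \log n)$ bits) and the $(i,j)$ offsets; since $2^{O(\log^\varepsilon n \log \log n)} = o(n)$, the tables fit easily. With these tables and the per-block encodings, $\Access(\hat{C},i',j')$ and $\Predecessor(\hat{C},i',j')$ become single table lookups. Then $\SlabSum(i,j)$ is answered by computing block/column group/cell from $(i,j)$, locating the enclosing cell via the fusion-tree predecessor, and returning $\Access(\hat{C},i',j') + r_\ell$. For $\SlabSelect(i,j)$, I would locate the cell $C_\ell$ containing representative $r_\ell = \operatorname{pred}(j)$, then by Lemma~\ref{lem:grid}(iii)--(iv) restrict the search to $C_{\ell-1},C_\ell,C_{\ell+1}$, identify the correct cell $C$ with a constant number of $\SlabSum$ probes at the top rows, and finish with $\Predecessor(\hat{C},i', j-\SlabSum(i,j'')) + j''$.

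The main obstacle I expect is the tension between storing the grid compactly and still doing unit-time arithmetic on its values. A plain per-column representation costs $\Omega(\log n)$ bits per entry; the win comes from observing that inside a row group the "slack" around a representative is only $O(b\Delta)$ bits, so normalized cells compress below $\log n$ bits per entry, while globally the fusion-tree layer absorbs the $\log n$-bit coordinates. The remaining care is bookkeeping: verifying that the three-cell window for $\SlabSelect$ is always sufficient (which is exactly Lemma~\ref{lem:grid}(iii)), that the leftmost-column encoding actually fits in $O(b \log \log n)$ bits across all cells of a block (using that column groups tile the block and that cell heights sum to $b$), and that the tabulated key length stays sub-logarithmic so the table is $o(n)$ bits. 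Once those are checked, combining Lemma~\ref{lem:grid} with the encodings and the two global tables yields the stated $O(n \log \log n)$ bits and $O(1)$ query time.
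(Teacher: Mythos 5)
Your proposal follows the paper's own construction essentially verbatim: the same prefix-sum grid $P$, the same block/column-group/row-group decomposition with normalized cells, the same per-block components (fusion-node predecessor over representatives, leftmost columns of normalized cells, per-column update tokens), the same global lookup tables, and the same three-cell window argument for $\SlabSelect$. The space and time accounting matches the paper's, so this is correct and not a distinct route; only minor restatements differ (e.g., row groups are defined by adjacent \emph{rows} whose leftmost entries differ by at most $b$, so representatives of \emph{distinct} row groups differ by more than $b$).
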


\section{Segment Selection}\label{sec:segmentselection}
We now show how to solve segment selection in $O(n)$ space and $O(\log n/ \log \log n)$ query time. 
In addition to our slab selection data structure from Section~\ref{sec:slabselection}, we will also 
need a compact representation of strings that supports \emph{rank} and \emph{select queries} on polylogarithmic sized alphabets. Let $A$ be a string of length $n$ over an alphabet $[1,\sigma]$, and define the following queries: 

\begin{itemize}
	\item $\Rank(A, i, j)$: return the number of occurrences of $j$ in $X[1,i]$,
	\item $\Select(A, i, j)$: return the position of the $i$th occurrence of character $j$. 
\end{itemize}
Supporting $\Rank$ and $\Select$ on polylogarithmic sized alphabets is a well-studied problem, see e.g., \cite{FMMN2007, GGV2003,BN2015, GRR2008, GMR2006, BHMR2007, BCGNN2014}. Most of this work focuses on achieving constant time using succinct or compressed space. For our purposes we only need the following standard result which follows immediately from the above mentioned results. 
\begin{lemma}\label{lem:rankselect}
	Let $S$ be a string of length $n$ from an alphabet of size $\sigma = O(\polylog n)$. Then, we can represent $S$ in $O(n\log \sigma)$ bits and support $\Rank$ and $\Select$ queries in $O(1)$ time. 
\end{lemma}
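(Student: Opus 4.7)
The plan is to build a multi-ary wavelet tree over $S$. Partition the alphabet $[1,\sigma]$ into $\Delta = \log^{\varepsilon} n$ contiguous ranges $R_1, \ldots, R_\Delta$ of roughly equal size, and at the root store a sequence $W$ of length $n$ over alphabet $[1,\Delta]$ where $W[i] = k$ iff $S[i] \in R_k$. Recurse on each $R_k$ using the subsequence of $S$ restricted to characters of $R_k$. Because $\sigma = O(\polylog n)$, the tree has height $O(\log \sigma / \log \Delta) = O(1)$, so each $\Rank$ or $\Select$ query on $S$ reduces to a constant number of $\Rank$ or $\Select$ queries on the internal sequences over alphabet $[1,\Delta]$, together with constant-time arithmetic to translate between levels.

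The core subroutine is then to support $\Rank$ and $\Select$ in constant time on a length-$n$ sequence $W$ over an alphabet of size $\Delta = \log^{\varepsilon} n$, using $O(n \log \Delta)$ bits. I would pack $W$ into machine words, split it into chunks of $\Theta(\log n / \log \Delta)$ characters each (so each chunk fits into $O(\log n)$ bits), and store per-chunk prefix sums of character frequencies. A global lookup table indexed by (chunk contents, character, offset) then resolves the within-chunk portion of any $\Rank$ query in constant time; since there are only $2^{O(\log n)} \cdot \Delta \cdot O(\log n / \log \Delta) = n^{O(1)}$ table entries of $O(\log n)$ bits each, one can make this table $o(n)$ bits by choosing constants carefully, or simply absorb it into $o(n)$-bit global tables shared across the whole data structure. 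A symmetric two-level structure, standard in the rank/select literature cited in the paper, handles $\Select$ with the usual case split on whether the queried character is frequent or rare.

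For the space, each level of the wavelet tree stores sequences of combined length $n$ over an alphabet of size $\Delta$, using $O(n \log \Delta)$ bits; summing over the $O(\log \sigma / \log \Delta)$ levels gives $O(n \log \sigma)$ bits, and the global tables contribute $o(n)$. The main obstacle is really constant-time $\Select$ on the internal sequences: $\Rank$ is a direct chunk-sum plus table lookup, whereas $\Select$ must be split carefully into dense and sparse character cases (with a different indexing scheme for characters with few occurrences) to avoid breaking the constant-time bound. Fortunately, as the paper notes, all of these ingredients are standard and spelled out in the cited references \cite{FMMN2007,GMR2006,BHMR2007,GRR2008,BCGNN2014,BN2015}, so the proof amounts to assembling the multi-ary wavelet tree above on top of those black boxes.
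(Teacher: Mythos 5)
Your construction is correct and is essentially the route the paper takes: the paper does not prove this lemma itself but states that it follows immediately from the cited standard results, which are precisely the multi-ary wavelet tree over a $\log^{\varepsilon} n$-ary alphabet decomposition (constant height since $\sigma = O(\polylog n)$) combined with packed chunks, prefix-sum samples, and $o(n)$-bit lookup tables for constant-time $\Rank$ and $\Select$ on the internal sequences, as in \cite{FMMN2007} and the other cited works. Your account, including the caveat that constant-time $\Select$ needs the standard dense/sparse case split, matches that standard machinery, so there is nothing to object to.
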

Next, we describe our data structure. Let $L$ be a set of $n$ segments. We assume that $L$ is given in "rank space" as in the previous section. Otherwise, we can always convert $L$ into this representation by standard rank reduction techniques. 
Let $\Delta = \log^\varepsilon n$, where $0 < \varepsilon < 1$. We construct a balanced tree $R$ with degree $\Delta$ that stores the segments in $L$ in the leaves in sorted $y$-order. The height of $R$ is $O(\log_\Delta n) = O(\log n/\log \log n)$.

We introduce some helpful notation. Let $v$ be an internal node with children $v_1, \ldots, v_\Delta$. The subtree rooted at $v$ is denoted $R_v$, and the set of segments below $v$ is denoted $L_v$. We let $n_v = |L_v|$. 
The endpoints in $L_v$ are "rank reduced" to a grid of size $4n_v \times n_v$ in the following way. For an endpoint $p=(x,y)$ let $r_x$ and $r_y$ denote the rank of $p$ when the endpoints are sorted by $x$-order and $y$-order, respectively. Then $p'=(2r_x-1,r_y)$. Let $L_v'$ denote the set of rank reduced segments. 
The \emph{slab} of $v$, denoted $\slab(v)$,  is the narrowest infinite horizontal band containing $L_v'$. We number the slabs in increasing $y$-order. We partition the segments of $L_v'$ into $\slab(v_1),\ldots, \slab(v_\Delta)$.  
At each internal node $v$ we store the following: 

\begin{itemize}
	\item 
	A string $E_v$ of length $4n_v$ that, for each endpoint in $L_v'$ in $x$-order, stores the slab containing it interleaved with $0$'s. More precisely, $E_v[i]$ is the number of the slab that contains the endpoint with $x$-coordinate $i$ if $i$ is odd, and $0$ if $i$ is even.
	
	We represent $E_v$ as a rank/select structure according to Lemma~\ref{lem:rankselect}. Since $E_v$ is a string of length $4n_v$ over an alphabet of size $\Delta$ we use $O(n_v\log \Delta) = O(n_v \log \log n)$ bits of space and support $\Rank$ and $\Select$ queries in constant time.
	\item A slab selection structure according to  Lemma~\ref{lem:slabselection} on $L_v'$ with slabs $\slab(v_1),\ldots, \slab(v_\Delta)$. 
	
	The slab selection structure uses $O(n_v \log \log n)$ bits of space and supports queries in constant time.
\end{itemize}
\begin{figure}[t]
\centering
  \includegraphics[scale=0.53]{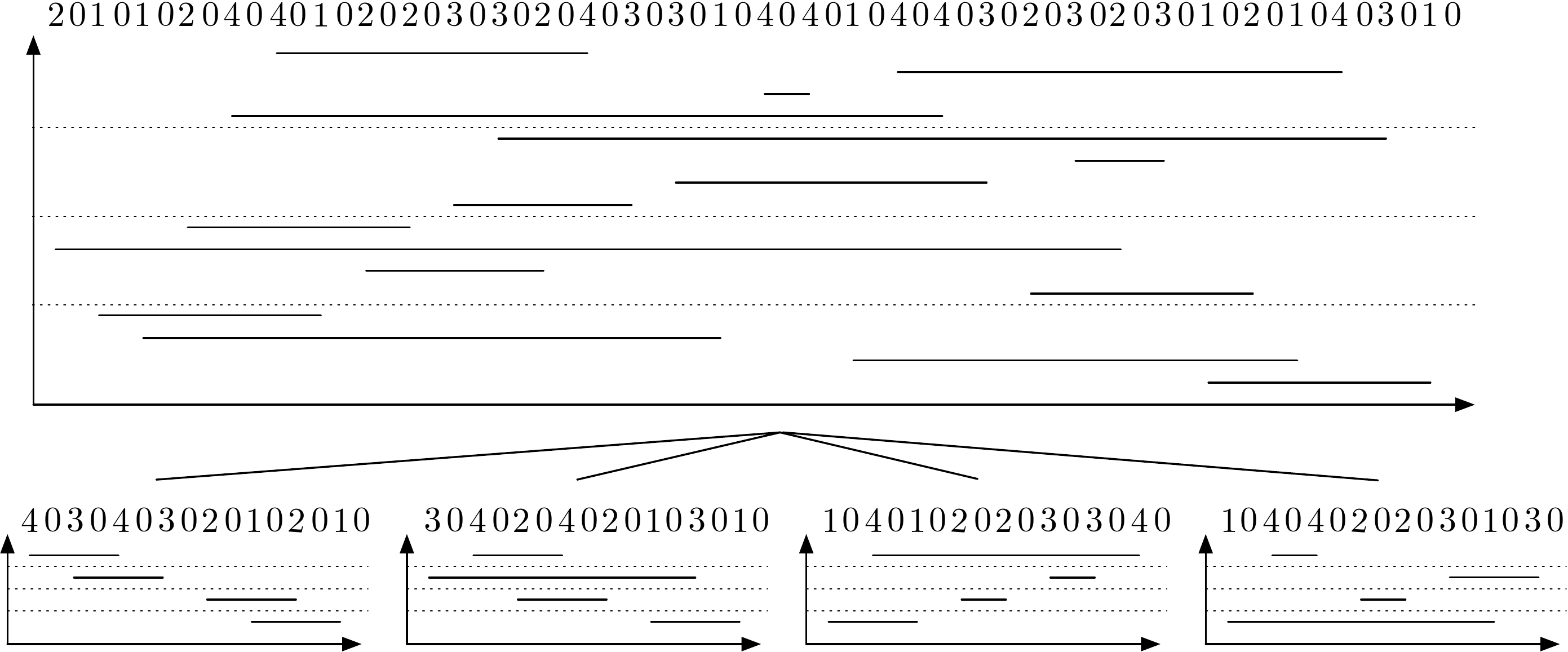}
  \caption{\label{fig:recursion} A node and its 4 children in the tree $R$ corresponding to a partition of segments into $\Delta = 4$ slabs. The string  $E_v$ is shown at each node. 
  }   
\end{figure}
See Figure~\ref{fig:recursion}.
At node $v$ we use $O(n_v \log \log n)$ bits. Since each segment appears in $O(\log n/\log \log n)$ structures the total space is $O(n \frac{\log n}{\log \log n } \log \log n) = O(n\log n)$ bits. 

To answer a $\SegmentSelect(i,j)$ query we perform a top-down search in $R$ starting at the root and ending at the leaf containing the $j$th segment that intersects the vertical line at time $i$. To guide the navigation, we compute \emph{local parameters} $i_v$ and $j_v$ at each node $v$, such that $i_v$ is the time in $L_v$ that corresponds to time $i$ in $E$, and $j_v$ is the segment in $L_v$ that corresponds to segment $j$ in $L$. At the root $r$, we have $i_r = i$ and $j_r = j$. Consider an internal node $v$ with children $v_1, \ldots, v_\Delta$ during the traversal. Given the local parameters $i_v$ and $j_v$ we compute the child to continue the search in and new local parameters. 
We first compute the slab containing the $j$th segment as 
\[k = \SlabSelect(v, i_v, j_v) \qquad \textrm{ and } \qquad j_{v_k} = j_v - \SlabSum(v, i_v, k-1)\;.\] 
Thus, the search should continue in child $v_k$, and we subtract the number of segments in the previous slabs from $j_v$ to get $j_{v_k}$. 
To compute $i_{v_k}$ we first compute 
$
r_k = \Rank(E_v, i_v, k)
$. 
Since $i_v$ might not be a point in $L_{v_k}$
we then set
\[
i_{v_k} = 
\begin{cases}
2r_k - 1& \textrm{if } E_v[i_v] = k\\
2r_k  & \textrm{otherwise }
\end{cases}
\]
By Lemma~\ref{lem:slabselection} and Lemma~\ref{lem:rankselect} each of the above steps takes constant time and hence the total time is $O(\log n/\log \log n)$. We have the following property of $i_{v_k}$.

\begin{lemma}\label{lem:rankreduction}
The segments from slab $k$ in $L_v'$ that are intersected by $i_v$ are the same as the segments intersected by $i_{v_k}$ in $L_{v_k}'$. 
\end{lemma}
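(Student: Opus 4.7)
The plan is to establish, for each slab-$k$ segment $s \in L_{v_k}$, that the condition "$s$ is crossed by the vertical line at $x$-coordinate $i_v$ in $L_v'$" translates exactly to "$s$ is crossed at $x$-coordinate $i_{v_k}$ in $L_{v_k}'$" via the rank bookkeeping encoded in $r_k = \Rank(E_v, i_v, k)$. Concretely, I would fix $s$ and let $x_s, x_s'$ denote its two $L_v'$ $x$-coordinates and $\rho_s, \rho_s'$ the ranks of these endpoints among the slab-$k$ endpoints only. Because $L_{v_k}'$ rank-reduces just the slab-$k$ endpoints by setting an endpoint of rank $\rho$ to $x$-coordinate $2\rho - 1$, the two $L_{v_k}'$ endpoints of $s$ are $2\rho_s - 1$ and $2\rho_s' - 1$. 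The crossing condition then becomes $x_s \leq i_v \leq x_s'$ on the $L_v'$ side and $2\rho_s - 1 \leq i_{v_k} \leq 2\rho_s' - 1$ on the $L_{v_k}'$ side.

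Next I would split according to the two cases in the definition of $i_{v_k}$. In the case $E_v[i_v] = k$, the coordinate $i_v$ is itself a slab-$k$ endpoint of slab-$k$-rank exactly $r_k$; since the ordering of slab-$k$ endpoints by $L_v'$-$x$ matches their ordering by rank, $x_s \leq i_v$ is equivalent to $\rho_s \leq r_k$, which is equivalent to $2\rho_s - 1 \leq 2r_k - 1 = i_{v_k}$, and symmetrically for the right endpoint. In the case $E_v[i_v] \neq k$, no slab-$k$ endpoint sits exactly at $i_v$, so $r_k$ counts the slab-$k$ endpoints with $L_v'$-$x$-coordinate strictly less than $i_v$. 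Then $x_s \leq i_v$ reduces to $\rho_s \leq r_k$ and $x_s' \geq i_v$ reduces to $\rho_s' \geq r_k + 1$; with $i_{v_k} = 2r_k$, the inequalities $2\rho_s - 1 \leq 2r_k$ and $2\rho_s' - 1 \geq 2r_k$ reduce (as integer relations) to the same $\rho_s \leq r_k$ and $\rho_s' \geq r_k + 1$. Hence both conditions coincide in both cases, proving the lemma.

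The main obstacle I anticipate is purely bookkeeping: tracking whether $\Rank(E_v, i_v, k)$ is a weak or strict count of slab-$k$ endpoints up to $i_v$ and handling the parity that the factor $2\rho-1$ shifts the rank-space endpoint to an odd coordinate while $i_{v_k}$ may be odd or even. Being careful about which inequalities are strict in the case $E_v[i_v] \neq k$ (where $i_v$ is not itself a slab-$k$ endpoint, so $x_s \leq i_v$ is really $x_s < i_v$) is what makes the even offset $i_{v_k} = 2r_k$ come out correctly, and this is the only subtle point in the argument.
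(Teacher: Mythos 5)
Your proof is correct and follows essentially the same route as the paper's: express the $L_{v_k}'$ endpoints of a slab-$k$ segment as $2\rho-1$ via its rank among slab-$k$ endpoints (which is exactly $\Rank(E_v,\cdot,k)$), then split on whether $E_v[i_v]=k$ and use the parity of $i_{v_k}$ ($2r_k-1$ versus $2r_k$) to match the crossing inequalities. The only difference is presentational: you establish both containments at once as chains of equivalences, while the paper argues the two directions of the set equality separately.
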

\begin{proof}
Let $I_v$ denote the set of segments from slab $k$ in $L_v'$ intersected by $i_v$ and let $I_{v_k}$ denote the set of segments in $L_{v_k}'$ intersected by $i_{v_k}$. Let $s$ a segment from $L_{v_k}$ with $x$-coordinates $(x_1,x_2)$ in $L_v'$ and $(x_1', x_2')$ in $L_{v_k}'$.  Define $r_i = \Rank(E_v,x_i,k)$. From the definition of the rank reduction we have $x_i' = 2\cdot r_i-1$. We will show that $s \in I_v$ iff $s \in I_{v_k}$.

First assume $s \in I_v$. Then $x_1\leq i_v\leq x_2$, which implies that  $\Rank(E_v,x_1,k) \leq \Rank(E_v,i_v,k)\leq \Rank(E_v,x_2,k)$, that is $r_1\leq r_k\leq r_2$. 
We need to prove that $x_1'\leq i_{v_k}\leq x_2'$. If $E_v[i_v]=k$, i.e., $i_v$ is an endpoint of a segment in slab $k$ then it immediately follows that $x_i'=2r_1-1 \leq 2r_k-1 = i_{v_k}$ and similarly that $i_{v_k}\leq x_2'$.  If $E_v[i_v]\neq k$ then $i_{v_k}$ is not an endpoint in $L_{v_k}$ and thus $x_1< i_v< x_2$. This implies that $r_1\leq r_k<r_2$. We have $r_1=r_k$ in the case where $x_1$ is the rightmost endpoint in slab $k$ smaller than $i_v$. It follows immediately that $2r_1-1 \leq 2r_k-1< 2 r_k < 2 r_2$, and therefore $x_1'< i_{v_k}< x_2'$.

Assume $s \in I_{v_k}$. Then $x_1'\leq i_{v_k}\leq x_2'$ and we want to prove that $x_1\leq i_v\leq x_2$. We have $2r_1-1\leq i_{v_k}\leq 2r_2-1$.  We will first show that $r_1\leq r_k\leq r_2$. There are two cases. If $E[i_v]=k$ then $i_{v_k} = 2r_k -1$ and it follows immediately that $r_1\leq r_k\leq r_2$.  If $E[i_v]\neq k$ then $i_{v_k} = 2r_k$ and thus $2r_1-1\leq i_{v_k}\leq 2r_2-1$ implies $2r_1-1< 2r_k < 2r_2-1$ which again implies that $r_1\leq r_k\leq r_2$. By definition of $\Rank$ we have that $\Rank(E_v,x_1,k) \leq \Rank(E_v,i_v,k)\leq \Rank(E_v,x_2,k)$ implies $x_1\leq i_v\leq x_2$.
\end{proof}
%
%

In summary, this proves Theorem~\ref{thm:segmentselection}.  

Combined with the reduction in Lemma~\ref{lem:reduction} we obtain a linear space and $O(\log n/ \log \log n)$ time solution for the random access problem. To show  Theorem~\ref{thm:mainupperbound} it only remains to show how to report a substring of length $\ell$ in $O(\log n/\log \log n + \ell)$ time. To do so we build the \emph{hive graph} of Chazelle~\cite{Chazelle1986} on the segments. This uses $O(n)$ space and allows us to traverse the segments through the vertical line at time $i$ above a given segment in sorted order in constant time per reported segment. To report a substring of length $\ell$ we simply 
perform the corresponding segment selection and traverse the $\ell$ segments above. By Lemma~\ref{lem:segments} this gives us the correct substring. This uses $O(\log n/ \log \log n + \ell)$ time. This completes the proof of Theorem~\ref{thm:mainupperbound}.

Finally, we show how to construct the random access data structure of  Theorem~\ref{thm:mainupperbound} in $O(n \log n)$ time. Given a version tree $T$ with $n$ nodes it is straightforward to construct the corresponding instance of the segment selection problem $L$ as described in Section~\ref{sec:reduction} in $O(n)$ time in a single traversal of $T$. We then construct tree $R$ over the segments in $L$  recursively. At each node $v$ we build the slab selection data structure from Section~\ref{sec:slabselection} consisting of $n_v$ segments. To do so, we construct the grid,  the predecessor data structure, and the compact encoding in $O(n_v)$ time. The global tables for the normalized cells need only be constructed once in $O(n)$ total time. Furthermore, we also need to build the rank/select data structure from Lemma~\ref{lem:rankselect}. This can also be done in $O(n_v)$ time, and hence constructing these data structures on all nodes in $R$ takes $O(n \log_\Delta n) = O(n \log n/ \log \log n)$ time. Finally, constructing the hive graph can be done in $O(n \log n)$ time~\cite{Chazelle1986}.

\section{Lower Bounds}\label{sec:lowerbound}
\begin{figure}[t]
\centering
  \includegraphics[scale=0.6]{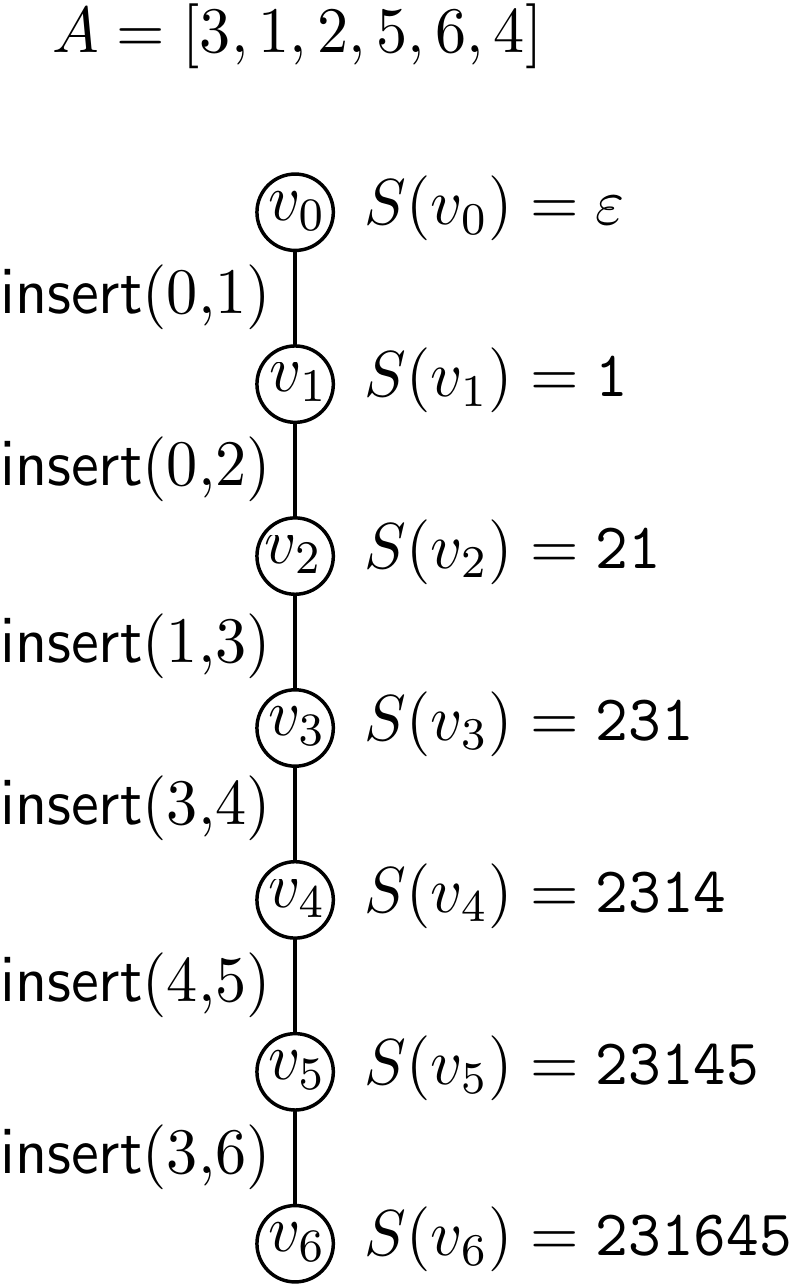}
  \caption{\label{fig:lowerbound} The corresponding random access instance for an array $A = [3,1,2,5,6,4]$.}   
\end{figure}

We now prove the lower bounds in Theorems~\ref{thm:mainlowerbound} and~\ref{thm:segmentselectionlowerbound} for random access and segment selection, respectively. For the random access problem we show a reduction from the following problem: Let $A$ be an array of $n$ unique integers. The \emph{prefix selection problem} is to preprocess $A$ to support prefix selection queries, that is, given integers $i$ and $j$ report the $j$th smallest integer in the subarray $A[1..i]$.
\begin{lemma}[J{\o}rgensen and Larsen~\cite{JL2011}] Any data structure that uses $n\log^{O(1)} n$ space on an input array of size $n$ needs $\Omega(\log n/ \log \log n)$ time to support prefix selection queries. 	
\end{lemma}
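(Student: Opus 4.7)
The plan is to prove this following the cell-sampling technique of Pătraşcu and Thorup, as adapted by Jørgensen and Larsen. The proof is information-theoretic: exhibit an input distribution with large entropy, show that the answers to many prefix-selection queries collectively encode that entropy, and then argue that a data structure that is too small and too fast cannot support this much information flow.

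For the hard distribution, I would take $A$ to be a uniformly random permutation of $\{1,\ldots,n\}$, which has entropy $\log_2(n!) = \Theta(n\log n)$ bits. The collection of queries I would use is $\PrefixSelect(i_k, j)$ for a geometric sequence $i_k = n/2^k$ and all $j \in [1,i_k]$. For fixed $k$, the multiset of these answers is exactly the sorted version of $A[1..i_k]$, and knowing this sorted sequence for every $k$ pins down $A$ entirely: combining the sets for $i_k$ and $i_{k+1}$ tells us which $i_{k+1}$ elements of $A[1..i_k]$ lie in the first half, and recursion on $k$ reconstructs $A$. In particular, $\Theta(n)$ suitably chosen prefix-selection answers carry $\Theta(n\log n)$ bits about $A$.

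Now suppose for contradiction that a data structure uses $s = n\log^{O(1)} n$ cells of $w = \Theta(\log n)$ bits and answers each query in $t = o(\log n/\log\log n)$ probes. Sample each memory cell independently with probability $p = (1/s)^{1-1/t} \cdot n^{1/t}$ (up to polylogarithmic factors). The expected size of the sample is $ps = O(n^{1/t} s^{1/t}) = n^{1/t} \cdot (n\log^{O(1)} n)^{1/t}$, which for $t = o(\log n/\log\log n)$ is $n^{o(1)} \cdot n^{1/t}$ and in particular much smaller than $n$. Yet any query whose $t$ probed cells all fall in the sample can be answered from the sample alone, and this happens with probability $p^t$. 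A Markov/averaging argument then shows that the sample alone suffices to recover $\Omega(p^t \cdot n) = \Omega(n)$ of our queries, hence $\Omega(n\log n)$ bits of information about $A$. But the sample itself can be encoded in $O(ps \cdot w) = o(n\log n)$ bits, contradicting the entropy lower bound on the recovered information.

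The main obstacle is the bookkeeping around ``independent enough'' queries: one must ensure that the $\Omega(n)$ recoverable answers are not redundant and together still encode $\Omega(n\log n)$ bits. This is handled by restricting to one scale $i_k$ at a time (within one scale, distinct prefix-selection answers are distinct elements and therefore genuinely informative) and then summing across scales with logarithmic loss, which is absorbed into the $\log^{O(1)} n$ space slack. A secondary technical point is the exact tuning of $p$ so that both sides of the inequality $p^t \cdot n \log n \gg ps \cdot w$ are simultaneously achievable; this is precisely where the threshold $t = \Theta(\log n/\log\log n)$ arises, because $s = n\log^{O(1)} n$ forces $s^{1/t} = n^{o(1)}$ only when $t$ grows at this rate. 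Once these bookkeeping issues are settled, the contradiction is immediate and the lower bound follows.
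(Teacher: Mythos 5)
The paper never proves this lemma: it is imported verbatim from J{\o}rgensen and Larsen~\cite{JL2011} and used as a black box (the paper's own contribution in Section~\ref{sec:lowerbound} is only the reduction from prefix selection to random access). So the question is whether your sketch stands on its own, and it does not: the cell-sampling arithmetic is internally inconsistent. With your choice $p = s^{-(1-1/t)}n^{1/t}$ you have $p^t = n/s^{t-1}$, which for $s \geq n$ and $t \geq 3$ is at most $n^{2-t} = o(1)$; since your hard query set has only $m = O(n)$ queries (the geometric scales contribute $\sum_k n/2^k$ of them), the expected number of resolved queries is $p^t m = O(n^{3-t}) = O(1)$, not $\Omega(n)$ as you claim. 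Conversely, to resolve $\Omega(n)$ of your $O(n)$ queries you would need $p^t = \Omega(1)$, hence $p$ essentially constant, and then the sample costs $\Theta(p\,s\,w) = \Omega(n\log^{c+1} n)$ bits (writing $s = n\log^{c}n$), which already exceeds the $\Theta(n\log n)$ bits of input entropy you are trying to beat. In general the inequality your contradiction rests on, $p^t m\log n \gg p\,s\,w$, simplifies to $p^{t-1} \gg sw/(m\log n) = \Omega(\log^{c} n) \geq \Omega(1)$, which no $p \leq 1$ satisfies. This is not a tuning issue that ``the $\log^{O(1)}n$ slack absorbs'': a one-round compression argument of this shape cannot give any superconstant bound here, because the total input entropy ($\Theta(n\log n)$ bits) is already no larger than the space budget, and a data structure is of course allowed to store the input.

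Enlarging the query set to all $\Theta(n^2)$ prefix-selection queries fixes the counting on one side but exposes the real difficulty on the other: you would then have to show that the $\approx n$ queries that happen to be resolved by a random sample still jointly reveal $\Omega(n\log n)$ bits, even though prefix-selection answers are massively redundant across nearby prefixes ($\PrefixSelect(i,j)$ and $\PrefixSelect(i+1,j)$ usually coincide), and your ``one scale at a time'' bookkeeping does not address a random subset spread over all scales. This is precisely where plain cell sampling is not known to succeed for this problem; J{\o}rgensen and Larsen obtain the $\Omega(\log n/\log\log n)$ bound by a different and heavier route (an argument in the lineage of P\u{a}tra\c{s}cu's lopsided set disjointness / communication framework for static $\Omega(\log n/\log\log n)$ bounds, tailored to range selection), not by the Panigrahy--Talwar--Wieder-style sampling you describe. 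As used in this paper, the correct ``proof'' of the lemma is the citation; a self-contained proof would require reproducing that machinery.
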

Given an input array $A$ to the prefix selection problem, we construct an instance $T$ of the random access problem. Our reduction allows any prefix selection query on $A$ to be answered by a single random access query on $T$. The reduction works even when $T$ is a path without any deletions.

Let $A$ be an array of length $n$ consisting of unique integers in $\{1, \ldots, n\}$. Our instance $T$ is a path of $n+1$ nodes $v_0, \ldots, v_n$ rooted at $v_0$. See Figure~\ref{fig:lowerbound}. Edge $(v_{i-1}, v_i)$ is labeled by $\Insert(r_i, i)$, where $r_i$ is the number of entries in $A[1..i]$ that are smaller $A[i]$. We have that $S(v_i)$ is permutation of indices in $\{1,..,i\}$ corresponding to the sorted order of $A[1..i]$, that is, $A[S(v_i)[1]] < \cdots < A[S(v_i)[i]]$. In particular, $S(v_i)[j]$ is the index of the $j$th smallest integer in $A[1..i]$. Hence, we can answer a prefix selection query $\PrefixSelect(i,j)$ by computing $\Access(v_i, j)$. This completes the proof of Theorem~\ref{thm:mainlowerbound}.

For the segment selection problem, we note that our reduction in Lemma~\ref{lem:reduction} combined with Theorem~\ref{thm:mainlowerbound} directly implies Theorem~\ref{thm:segmentselectionlowerbound}.

\section{Conclusion and Open Problems}\label{sec:conclusion}
We have initiated the study of persistent strings for storing and accessing compressed collections of similar strings. We have shown how to store a persistent string in linear space with optimal random access time. An interesting open problem is to make our solution \emph{dynamic} by supporting insertion of new nodes in the version tree (representing new strings added to the collection). Another open problem is to improve our straightforward $O(n\log n /\log \log n)$ preprocessing time to optimal $O(n)$.

\section{Acknowledgments}
We thank Jesper Jansson for pointing out segment selection as a problem of independent interest and the anonymous reviewers for their helpful comments that improved the presentation of earlier versions of this paper.  

\bibliography{master}

\begin{thebibliography}{10}

\bibitem{AAKMTY2012}
Pankaj~K Agarwal, Lars Arge, Haim Kaplan, Eyal Molad, Robert~E Tarjan, and
  Ke~Yi.
\newblock An optimal dynamic data structure for stabbing-semigroup queries.
\newblock {\em {SIAM} J. Comput.}, 41(1):104--127, 2012.

\bibitem{BCGNN2014}
J{\'e}r{\'e}my Barbay, Francisco Claude, Travis Gagie, Gonzalo Navarro, and
  Yakov Nekrich.
\newblock Efficient fully-compressed sequence representations.
\newblock {\em Algorithmica}, 69(1):232--268, 2014.

\bibitem{BHMR2007}
J{\'e}r{\'e}my Barbay, Meng He, J~Ian Munro, and S~Srinivasa Rao.
\newblock Succinct indexes for strings, binary relations and multi-labeled
  trees.
\newblock In {\em Proc. 18th SODA}, pages 680--689, 2007.

\bibitem{BCPT2014}
Djamal Belazzougui, Patrick~Hagge Cording, Simon~J. Puglisi, and Yasuo Tabei.
\newblock Access, rank, and select in grammar-compressed strings.
\newblock In {\em Proc. 23rd ESA}, pages 142--154, 2015.

\bibitem{BN2015}
Djamal Belazzougui and Gonzalo Navarro.
\newblock Optimal lower and upper bounds for representing sequences.
\newblock {\em ACM Trans. Algorithms}, 11(4):1--21, 2015.

\bibitem{BCCGSVV2018}
Philip Bille, Anders~Roy Christiansen, Patrick~Hagge Cording, Inge~Li G{\o}rtz,
  Frederik~Rye Skjoldjensen, Hjalte~Wedel Vildh{\o}j, and S{\o}ren Vind.
\newblock Dynamic relative compression, dynamic partial sums, and substring
  concatenation.
\newblock {\em Algorithmica}, 80(11):3207--3224, 2018.
\newblock Announced at ISAAC 2016.

\bibitem{BCPS2017}
Philip Bille, Anders~Roy Christiansen, Nicola Prezza, and Frederik~Rye
  Skjoldjensen.
\newblock Succinct partial sums and {F}enwick trees.
\newblock In {\em Proc. 24th SPIRE}, pages 91--96, 2017.

\bibitem{BEGV2018}
Philip Bille, Mikko~Berggren Ettienne, Inge~Li G{\o}rtz, and Hjalte~Wedel
  Vildh{\o}j.
\newblock Time--space trade-offs for lempel--ziv compressed indexing.
\newblock {\em Theoret. Comput. Sci.}, 713:66--77, 2018.

\bibitem{BG2020}
Philip Bille and Inge~Li G{\o}rtz.
\newblock Random access in persistent strings.
\newblock In {\em Proc. 31st ISAAC}, 2020.

\bibitem{BGLW2015}
Philip Bille, Inge~Li G{\o}rtz, Gad~M Landau, and Oren Weimann.
\newblock Tree compression with top trees.
\newblock {\em Inform. and Comput.}, 243:166--177, 2015.

\bibitem{BLRSSW2015}
Philip Bille, Gad~M. Landau, Rajeev Raman, Kunihiko Sadakane, Srinivasa~Rao
  Satti, and Oren Weimann.
\newblock Random access to grammar-compressed strings and trees.
\newblock {\em {SIAM} J. Comput.}, 44(3):513--539, 2015.
\newblock Announced at SODA 2011.

\bibitem{Chan2013}
Timothy~M Chan.
\newblock Persistent predecessor search and orthogonal point location on the
  word {RAM}.
\newblock {\em ACM Trans. Algorithms}, 9(3):1--22, 2013.

\bibitem{CP2009}
Timothy~M Chan and Mihai P{\v a}tra{\c{s}}cu.
\newblock Transdichotomous results in computational geometry, i: Point location
  in sublogarithmic time.
\newblock {\em {SIAM} J. Comput.}, 39(2):703--729, 2009.

\bibitem{CT2018}
Timothy~M Chan and Konstantinos Tsakalidis.
\newblock Dynamic planar orthogonal point location in sublogarithmic time.
\newblock In {\em Proc 34th SoCG 2018}, 2018.

\bibitem{CLLPPSS2005}
M.~Charikar, E.~Lehman, D.~Liu, R.~Panigrahy, M.~Prabhakaran, A.~Sahai, and
  A.~Shelat.
\newblock The smallest grammar problem.
\newblock {\em IEEE Trans. Inform. Theory}, 51(7):2554--2576, 2005.

\bibitem{Chazelle1986}
Bernard Chazelle.
\newblock Filtering search: A new approach to query-answering.
\newblock {\em {SIAM} J. Comput.}, 15(3):703--724, 1986.

\bibitem{COMNVW2012}
BG~Chern, Idoia Ochoa, Alexandros Manolakos, Albert No, Kartik Venkat, and
  Tsachy Weissman.
\newblock Reference based genome compression.
\newblock In {\em Proc. 12th ITW}, pages 427--431, 2012.

\bibitem{BVS1995}
Mark De~Berg, Marc Vankreveld, and Jack Snoeyink.
\newblock Two-dimensional and three-dimensional point location in rectangular
  subdivisions.
\newblock {\em J. Algorithms}, 18(2):256--277, 1995.

\bibitem{Die89}
P.~F. Dietz.
\newblock Fully persistent arrays (extended array).
\newblock In {\em Proceedings of the Workshop on Algorithms and Data
  Structures, Lecture Notes in Computer Science}, volume 382, pages 67--74,
  1989.

\bibitem{Dietz1989}
Paul~F Dietz.
\newblock Optimal algorithms for list indexing and subset rank.
\newblock In {\em Proc. 1st WADS}, pages 39--46, 1989.

\bibitem{DJSS2014}
Huy~Hoang Do, Jesper Jansson, Kunihiko Sadakane, and Wing-Kin Sung.
\newblock Fast relative {L}empel--{Z}iv self-index for similar sequences.
\newblock {\em Theoret. Comput. Sci.}, 532:14--30, 2014.

\bibitem{JSST1989}
J.~Driscoll, N.~Sarnak, D.~Sleator, and R.~Tarjan.
\newblock Making data structures persistent.
\newblock {\em J. Comput. System Sci.}, 38:86--124, 1989.

\bibitem{Fenwick1994}
Peter~M Fenwick.
\newblock A new data structure for cumulative frequency tables.
\newblock {\em Software: Pract. Exper.}, 24(3):327--336, 1994.

\bibitem{FMMN2007}
Paolo Ferragina, Giovanni Manzini, Veli M{\"a}kinen, and Gonzalo Navarro.
\newblock Compressed representations of sequences and full-text indexes.
\newblock {\em ACM Trans. Algorithms}, 3(2):20, 2007.

\bibitem{FV2007}
Paolo Ferragina and Rossano Venturini.
\newblock A simple storage scheme for strings achieving entropy bounds.
\newblock {\em Theoret. Comput. Sci.}, 372(1):115 -- 121, 2007.

\bibitem{FS1989}
Michael Fredman and Michael Saks.
\newblock The cell probe complexity of dynamic data structures.
\newblock In {\em Proc. 21st STOC}, pages 345--354, 1989.

\bibitem{FW1993}
Michael~L. Fredman and Dan~E. Willard.
\newblock Surpassing the information theoretic bound with fusion trees.
\newblock {\em J. Comput. System Sci.}, 47(3):424--436, 1993.

\bibitem{FW1994}
Michael~L. Fredman and Dan~E. Willard.
\newblock Trans-dichotomous algorithms for minimum spanning trees and shortest
  paths.
\newblock {\em J. Comput. System Sci.}, 48(3):533--551, 1994.

\bibitem{GGKNP2012}
Travis Gagie, Pawe{\l} Gawrychowski, Juha K{\"a}rkk{\"a}inen, Yakov Nekrich,
  and Simon~J Puglisi.
\newblock A faster grammar-based self-index.
\newblock In {\em Proc. 6th LATA}, pages 240--251, 2012.

\bibitem{GGKNP2014}
Travis Gagie, Pawe{\l} Gawrychowski, Juha K{\"a}rkk{\"a}inen, Yakov Nekrich,
  and Simon~J Puglisi.
\newblock {LZ77}-based self-indexing with faster pattern matching.
\newblock In {\em Proc. 14th LATIN}, pages 731--742, 2014.

\bibitem{GGP2015}
Travis Gagie, Pawe{\l} Gawrychowski, and Simon~J Puglisi.
\newblock Approximate pattern matching in lz77-compressed texts.
\newblock {\em J. Discrete Algorithms}, 32:64--68, 2015.

\bibitem{GKNPS2013}
Travis Gagie, Kalle Karhu, Gonzalo Navarro, Simon~J Puglisi, and Jouni
  Sir{\'e}n.
\newblock Document listing on repetitive collections.
\newblock In {\em Proc. 24th CPM}, pages 107--119, 2013.

\bibitem{GZL2018}
Moses Ganardi, Artur Jez, and Markus Lohrey.
\newblock Balancing straight-line programs.
\newblock In {\em Proc. 60th FOCS}, pages 1169--1183, 2019.

\bibitem{GMR2006}
Alexander Golynski, J~Ian Munro, and S~Srinivasa Rao.
\newblock Rank/select operations on large alphabets: a tool for text indexing.
\newblock In {\em Proc. 17th SODA}, pages 368--373, 2006.

\bibitem{GRR2008}
Alexander Golynski, Rajeev Raman, and S~Srinivasa Rao.
\newblock On the redundancy of succinct data structures.
\newblock In {\em Proc. 11th SWAT}, pages 148--159, 2008.

\bibitem{GGV2003}
Roberto Grossi, Ankur Gupta, and Jeffrey~Scott Vitter.
\newblock High-order entropy-compressed text indexes.
\newblock In {\em Proc. 14th SODA}, pages 841--850, 2003.

\bibitem{GRRV2013}
Roberto Grossi, Rajeev Raman, Satti~Srinivasa Rao, and Rossano Venturini.
\newblock Dynamic compressed strings with random access.
\newblock In {\em Proc. 40th ICALP}, pages 504--515. 2013.

\bibitem{HSS2011}
Wing-Kai Hon, Kunihiko Sadakane, and Wing-Kin Sung.
\newblock Succinct data structures for searchable partial sums with optimal
  worst-case performance.
\newblock {\em Theoret. Comput. Sci.}, 412(39):5176--5186, 2011.

\bibitem{HPZ2011}
Christopher Hoobin, Simon~J Puglisi, and Justin Zobel.
\newblock Relative {L}empel-{Z}iv factorization for efficient storage and
  retrieval of web collections.
\newblock {\em Proc. VLDB Endowment}, 5(3):265--273, 2011.

\bibitem{JL2011}
Allan~Gr{\o}nlund J{\o}rgensen and Kasper~Green Larsen.
\newblock Range selection and median: Tight cell probe lower bounds and
  adaptive data structures.
\newblock In {\em Proc. 22nd SODA}, pages 805--813, 2011.

\bibitem{KP2018}
Dominik Kempa and Nicola Prezza.
\newblock At the roots of dictionary compression: String attractors.
\newblock In {\em Proc. 50th STOC}, pages 827--840, 2018.

\bibitem{KPZ2010}
Shanika Kuruppu, Simon~J Puglisi, and Justin Zobel.
\newblock Relative {L}empel-{Z}iv compression of genomes for large-scale
  storage and retrieval.
\newblock In {\em Proc. 17th SPIRE}, pages 201--206, 2010.

\bibitem{KPZ2011}
Shanika Kuruppu, Simon~J Puglisi, and Justin Zobel.
\newblock Optimized relative {L}empel-{Z}iv compression of genomes.
\newblock In {\em Proc. 34th ACSC}, pages 91--98, 2011.

\bibitem{LDK1999}
Stan~Y. Liao, Srinivas Devadas, and Kurt Keutzer.
\newblock A text-compression-based method for code size minimization in
  embedded systems.
\newblock {\em Trans. Design Autom. Electr. Syst.}, 4(1):12--38, 1999.

\bibitem{LDKTW1998}
Stan~Y. Liao, Srinivas Devadas, Kurt Keutzer, Steven W.~K. Tjiang, and Albert
  Wang.
\newblock Code optimization techniques in embedded {DSP} microprocessors.
\newblock {\em Design Autom. Emb. Sys.}, 3(1):59--73, 1998.

\bibitem{MNSV2010}
Veli M{\"a}kinen, Gonzalo Navarro, Jouni Sir{\'e}n, and Niko V{\"a}lim{\"a}ki.
\newblock Storage and retrieval of highly repetitive sequence collections.
\newblock {\em J. Comput. Biol.}, 17(3):281--308, 2010.

\bibitem{MN2015}
J~Ian Munro and Yakov Nekrich.
\newblock Compressed data structures for dynamic sequences.
\newblock In {\em Proc. 23rd ESA}, pages 891--902. 2015.

\bibitem{Navarro2012}
Gonzalo Navarro.
\newblock Indexing highly repetitive collections.
\newblock In {\em Proc. 23rd IWOCA}, pages 274--279, 2012.

\bibitem{Navarro2019}
Gonzalo Navarro.
\newblock Document listing on repetitive collections with guaranteed
  performance.
\newblock {\em Theoret. Comput. Sci.}, 772:58--72, 2019.

\bibitem{Nekrich2011}
Yakov Nekrich.
\newblock A dynamic stabbing-max data structure with sub-logarithmic query
  time.
\newblock In {\em Proc. 22nd ISAAC}, pages 170--179, 2011.

\bibitem{PT2014}
Mihai P{\u{a}}tra{\c{s}}cu and Mikkel Thorup.
\newblock Dynamic integer sets with optimal rank, select, and predecessor
  search.
\newblock In {\em Proc. 55th FOCS}, pages 166--175, 2014.

\bibitem{PD2006}
Mihai P\v{a}tra\c{s}cu and Erik~D. Demaine.
\newblock Logarithmic lower bounds in the cell-probe model.
\newblock {\em SIAM J. Comput.}, 35(4):932--963, 2006.
\newblock Announced at SODA 2004.

\bibitem{RRR2001}
Rajeev Raman, Venkatesh Raman, and S~Srinivasa Rao.
\newblock Succinct dynamic data structures.
\newblock In {\em Proc. 7th WADS}, pages 426--437, 2001.

\bibitem{Rytter2003}
Wojciech Rytter.
\newblock {Application of Lempel-Ziv factorization to the approximation of
  grammar-based compression}.
\newblock {\em Theoret. Comput. Sci.}, 302(1-3):211--222, 2003.

\bibitem{SG2006}
Kunihiko Sadakane and Roberto Grossi.
\newblock Squeezing succinct data structures into entropy bounds.
\newblock In {\em Proc. 17th SODA}, pages 1230--1239, 2006.

\bibitem{ST86}
Neil Sarnak and Robert~Endre Tarjan.
\newblock Planar point location using persistent search trees.
\newblock {\em Commun. {ACM}}, 29(7):669--679, 1986.

\bibitem{SS1978}
James~A. Storer and Thomas~G. Szymanski.
\newblock The macro model for data compression.
\newblock In {\em Proc. 10th STOC}, pages 30--39, 1978.

\bibitem{Storer1982}
James~A Storer and Thomas~G Szymanski.
\newblock Data compression via textual substitution.
\newblock {\em J. ACM}, 29(4):928--951, 1982.

\bibitem{TV1984}
Robert~Endre Tarjan and Uzi Vishkin.
\newblock Finding biconnected componemts and computing tree functions in
  logarithmic parallel time.
\newblock In {\em Proc. 25th FOCS}, pages 12--20, 1984.

\bibitem{VY2013}
Elad Verbin and Wei Yu.
\newblock Data structure lower bounds on random access to grammar-compressed
  strings.
\newblock In {\em Proc. 24th CPM}, pages 247--258, 2013.

\bibitem{ZL1977}
Jacob Ziv and Abraham Lempel.
\newblock A universal algorithm for sequential data compression.
\newblock {\em IEEE Trans. Inform. Theory}, 23(3):337--343, 1977.

\end{thebibliography}

\end{document}